\def\@rmrk#1#2{\refstepcounter
    {#1}\@ifnextchar[{\@yrmrk{#1}{#2}}{\@xrmrk{#1}{#2}}}
\makeatletter\@addtoreset{equation}{section}\makeatother
 \newfont{\bfit}{cmbxti10 scaled 2000}
 \newfont{\biggi}{cmr12 scaled 2000}
 \newcommand{\eps}{\varepsilon}
 \newcommand{\R}{\mathbb{R}}
 \newcommand{\N}{\mathbb{N}}
 \newcommand{\prob}{\mathbb{P}}
 \newcommand{\me}{\mathbb{E}}
 \renewcommand{\P}{\mathbb{P}}
 \newcommand{\skrib}{{\mathcal B}}
 \newcommand{\skric}{{\mathcal C}}
 \newcommand{\skrie}{{\mathcal E}}
 \newcommand{\skrig}{{\mathcal G}}
 \newcommand{\skril}{{\mathcal L}}
 \newcommand{\skrim}{{\mathcal M}}
 \newcommand{\skrip}{{\mathcal P}}
 \newcommand{\skris}{{\mathcal S}}
 \newcommand{\skrix}{{\mathcal X}}
 \newcommand{\heap}[2]{\genfrac{}{}{0pt}{}{#1}{#2}}
 \newcommand{\sfrac}[2]{\mbox{$\frac{#1}{#2}$}}
\def\1{{\mathchoice {1\mskip-4mu\mathrm l}      
{1\mskip-4mu\mathrm l}
{1\mskip-4.5mu\mathrm l} {1\mskip-5mu\mathrm l}}}
\newcommand{\eq}{\begin{equation}}
\newcommand{\en}{\end{equation}}
\renewcommand{\subsection}{\secdef \subsct\sbsect}
\newcommand{\subsct}[2][default]{\refstepcounter{subsection}
\vspace{0.15cm}
{\flushleft\bf \arabic{section}.\arabic{subsection}~\bf #1  }
\nopagebreak\nopagebreak}
\newcommand{\sbsect}[1]{\vspace{0.1cm}\noindent
{\bf #1}\vspace{0.1cm}}
\newtheorem{theorem}{Theorem}[section]
\newtheorem{lemma}[theorem]{Lemma}
\newtheorem{cor}[theorem]{Corollary}
\newtheoremstyle{thm}{1.5ex}{1.5ex}{\itshape\rmfamily}{}
{\bfseries\rmfamily}{}{2ex}{}
\newtheoremstyle{rem}{1.3ex}{1.3ex}{\rmfamily}{}
{\itshape\rmfamily}{}{1.5ex}{}
\theoremstyle{rem}
\newtheorem{remark}{{\slshape\sffamily Remark}}[]
\def\thebibliography#1{\section*{References}
  \list%
  {\arabic{enumi}.}
    {\settowidth\labelwidth{[#1]}\leftmargin\labelwidth
    \advance\leftmargin\labelsep
    \parsep0pt\itemsep0pt
    \usecounter{enumi}}
    \def\newblock{\hskip .11em plus .33em minus .07em}
    \sloppy                   
    \sfcode`\.=1000\relax}
\begin{document}
\title[LLDP  and LDP  for  Signal -to- Interference -Plus- Noise Ratio Graph  Models ]
{\Large Local  Large  Deviation   Principle,  Large  Deviation  Principle and Information Theory for the Signal -to- Interference -Plus- Noise  Ratio Graph Models}

\author[]{}

\maketitle
\thispagestyle{empty}
\vspace{-0.5cm}

\centerline{\sc By  E.  Sakyi-Yeboah$^1$,   L.  Asiedu$^1$ and  K.  Doku-Amponsah$^{1,2} $}

{$^1$Department  of  Statistics and Actuarial  Science, University of  Ghana,  BOX  LG 115, Legon,Accra}

{$^2$ Email: kdoku-amponsah@ug.edu.gh}

{$^2$ Telephone: +233205164254}

\vspace{0.5cm}

\renewcommand{\thefootnote}{}
\footnote{\textit{Acknowledgement: }  This  Research work has been  supported  by funds  from  the  Carnegie Banga-Africa Project,  University  of  Ghana}
\renewcommand{\thefootnote}{1}


\begin{quote}{\small }{\bf Abstract.}
Given devices space $D$, an  intensity measure $\lambda m\in(0,\infty)$, a transition  kernel  $Q$  from  the  space  $D$  to positive real  numbers $\R_+,$  a  path-loss function (which  depends   on  the  Euclidean distance between  the devices and   a  positive constant $\alpha$), we  define  a  Marked  Poisson Point process (MPPP).  For  a  given  MPPP  and technical constants $\tau_{\lambda},\gamma_{\lambda}:(0,\,\infty)\to (0,\infty),$  we  define a Marked Signal-to- Interference and Noise Ratio (SINR)  graph, and  associate with  it  two  empirical  measures; the  \emph{empirical  marked  measure}  and  the \emph{empirical connectivity measure}.

 For a  class of marked SINR graphs,   we  prove  a joint \emph{ large  deviation  principle}(LDP)  for  these  empirical  measures,  with speed $\lambda$  in  the  $\tau$-topology.  From  the  joint  large  deviation  principle  for  the empirical marked  measure  and  the  empirical  connectivity measure,  we  obtain  an  Asymptotic  Equipartition Property(AEP)   for  network  structured data modelled  as  a  marked  SINR  graph.  Specifically,  we  show  that  for  large dense  marked  SINR graph   one  require  approximately about  $\lambda^{2}H(Q\times Q)/\log 2$ bits to  transmit the information  contained  in  the  network  with  high  probability, where $ H(Q\times Q)$  is  a  properly  defined  entropy for  the  exponential transition  kernel  with  parameter  $c$.

 Further, we prove  a \emph {local  large  deviation principle} (LLDP)  for  the  class  of  marked SINR  graphs on  $D,$   where  $\lambda[\tau_{\lambda}(a)\gamma_{\lambda}(a)+\lambda\tau_{\lambda}(b)\gamma_{\lambda}(b)]\to \beta(a,b),$ $  a,b\in (0,\infty)$,  with  speed $\lambda$ from a \emph{ spectral potential} point.  From  the LLDP we  derive  a  conditional   LDP  for  the  marked  SINR  graphs.
 
 Note  that,  while  the joint  LDP   is  established  in  the $\tau$-topology,  the  LLDP  assume  no  topological  restriction  on  the  space  of  marked SINR  graphs. Observe also  that  all  our  rate  functions  are  expressed  in  terms  of  the \emph{relative  entropy}  or  the \emph{kullback action} or \emph{divergence function}  of  the   marked SINR  on the  devices  space  $D$
\end{quote}

{\textit{AMS Subject Classification:} 60F10, 05C80, 68Q87,94A17}
\vspace{0.3cm}

{\textit{Keywords: }  SINR graph, Poisson  Point  process, Mark, Lebegues  Measure, Empirical Mark Measure, Emperical  Connectivity measure,  Asymptotic  Equipartion  Property, Concentration  Inequality,Relative  Entropy, Kullback  Action}
\vspace{0.3cm}


\section{Introduction and Background}
Wireless ad-hoc and sensor networks have been the topic of much recent research. Now, with the introduction of 5th generation (5G) cellular systems, several techniques; including advanced multiple access technology, massive-MIMO, full-duplex, advanced modulation and coding schemes (MCSs), and simultaneous wireless information and power transfer (SWIPT) will constitute the next phase in global telecommunication standard,  see   Luo et al. ~ \cite{LSBX2019}. 5G, a  type  of  communication which  is  based on parallel processing hardware and artificial intelligence, will play a  key role in wireless networks of the next generation,  see  Bangerter et  al.~\cite{BTAS2014}.  Furthermore, the process of 5G usages  will  come along with unprecedented and exigent requirement of which connectivity is a vital cornerstone. \\

In telecommunication, wireless network comprises of a number of nodes which connect over a wireless channel.  See  Gupta and Kumar~ \cite{GK2000}. The Signal -to -Inference-Plus- Noise Ratio (SINR) determines whether a given pair of nodes can communicate with each other at a given time. Connectivity occurs in wireless network, if two nodes communicate, possibly via intermediate nodes and also, the information transport capacity of the network,  See  Ganesh and Torrisi~\cite{GT2008}. In addition, network connectivity is related to various layers, components, and metrics of wireless communication systems; however, one vital performance indicator that strongly affects other metrics as well is the signal-to-interference-plus-noise-ratio (SINR). See,  Oehmann et al.~\cite{OAVSF2015}.\\

The SINR is of key significant to the analysis and design of wireless networks. In the process of addressing the additional requirement imposed on wireless communication, in particular, a higher availability of a highly accurate modeling of the SINR is required.  Gr{\"o}nkvist and Hansson~\cite{GH2001} works on SINR model rely on the assumption that nodes are uniformly distributed in the plane. On the contrast, the complexity of  solution paves way for computational efficiency  See, example, Behzad and  Rubin~\cite{BR2003}. \\

 More so, the  SINR model can  be  made  a complex model such that each transmission is given a power and then assumes a distance-dependent path loss. A transmission is deemed to be successful if the SINR is more than some specified threshold.  See,  Amdrews \& Dinitz~\cite{AD2009}. In contrast, a lot of recent work has shown that packets are successfully received only when SINR exceeds a given threshold, and assumes that packet reception rate (PRR) is zero below this threshold. See  example,   Santi et al.~\cite{SMRDB2009}. Further study of  the  SINR  graph  model  has  shown  that  an  SINR model  of interference is a more realistic model of interference than the protocol model of interference: a receiver node receives a packet so long as the signal to interference plus noise ratio is above a certain threshold.  See,  Bakshi et  al.~\cite{BJN2017}. Furthermore, Manesh and Kaabouch~\cite{MK2017}  stated that SINR is successful if the desired receiver surpasses the threshold. This enables the transmitted signal to be decoded with satisfactory root error probability.\\

The fundamental concept of SINR model determine as transceiver design on communication system that considers interference as noise. \cite{AD2009}  examine a set of transmitter receiver pairs located in the plane with each having an associated SINR requirement; and  satisfies  as many of the requirements as possible. In all communication systems, noise generated by circuit component in the receiver is a source of signal interruption. The ratio of the signal power to noise power is termed as SINR. The SINR is a vital indicator of communication link   quality.  See Jeske and  Sampath~\cite{JS2004} . In  the  article \cite{SMRDB2009}  the wireless link scheduling problem under a graded version of the SINR interference model  is  revisited. Indeed, the article  defines  wireless link scheduling problem under the graded SINR model, where they impose an additional constraint on the minimum quality of the usable links.. \\

Li et  al.~\cite{LPNC2006} examined the statistical distribution of the SINR for the Minimum Mean Square Error (MMSE) receiver in multiple-input multiple output wireless communication.  Their study decomposed SINR model into two independent random variables; the  first  part has an exact gamma distribution and the  second  part  was shown to converge in distribution to a Normal  distribution and approximate by Generalized Gamma. Also, AIAmmouri et  al.~ \cite{AAB2017} examined the SINR   and throughput of dense cellular network with stretched exponential path loss. It  was  established  (in the article) that  the area spectral efficiency, which assumes an adaptive SINR threshold, is non-decreasing with the base station density and converges to a constant for high densities. 

 An accurate SINR estimation provides for both a more efficient system and a higher user‐perceived quality of service.\\

In this paper, we prove  the local large deviation and large deviation principles of the Signal-To-Noise and Interference Ratio graph model (SINR). In  this  sequel  we  introduce a Marked Poisson Point Process (MPPP) and  the  marked SINR  graph  model. For a  class of  the  marked SINR graph, we define the empirical marked measure and the empirical connectivity measure.  Then, we  prove  a  joint Large Deviation Principle (LDP) for the empirical marked measure and the empirical connectivity measure  of  the  marked SINR graph  model,   with speed $\lambda$ in the $\tau-$topology.  From the joint large deviation principle, we obtain an Asymptotic Equipartition Property (AEP) for network structured data modelled as an SINR graph. See, example,  Doku-Amponsah~ \cite{DA2019}  for  a generalized  version  of  the  AEP  for  wireless  sensor  networks.

 Further,  we  prove  an  LLDP  for  the  SINR  graph  and  deduce  weak  variant  of  LDP  for  the  SINR  graph models from a spectral  potential  point.   To  be specific  about  this  approach,  given  an empirical  marked  measure $\omega$,  we define  the  so-called spectral  potential $U_{R^{D}}(\omega,\,\cdot),$  for  the marked SINR  graph process,     where  $R^{D}$ is  a  properly  defined  constant  function  which  depends  on  the  device  locations and the marks.   And  we  show  that  the \emph{Kullback  action} or  the \emph{ divergence  function} $I_{\omega}(\pi),$  with  respect  to  the  empirical connectivity  measure $\pi,$  is  the  legendre  dual of  the  spectral  potential.  See,  example  Doku-Amponsah~\cite{DA2017}  for  similar  results  for the critical multitype Galton-Watson process.

\section{Statement of  Results}\label{mainresults}

\subsection{The Marked SINR Model  for  Telecommunication Networks.}

Fix  a  dimension  $d\in\N$    and  a  measureable  set  $D\subset \R^d$    with  respect  to  the  Borel-Sigma  algebra  $\skrib(\R^d).$  Denote  by  $m$  the  Lebesgues  measure on  $\R^d.$ Given  an intensity  measure,   $\lambda m:D \to [0,1]$,   a  probability  kernel  $Q$  from  $D$   to  $\R_+$,  path  loss  function   $\ell(r)=r^{-\alpha}, $    (where  $\alpha \in(0,\infty)$),  and technical constants $\tau_{\lambda}, \gamma_{\lambda}:(0\,,\,\infty)\to (0\,,\,\infty)$ we  define  the marked SINR  Graph as  follows:

\begin{itemize}
\item  We  pick  $X=(X_i)_{i\in I}$   a  Poisson  Point  Process  (PPP)  with  intensity  measure  $\lambda m:D \to [0,1]$.
\item Given   $X,$   we    assign  each  $X_i  $   a  mark  $\sigma(X_i)=\sigma_i$  independently  according  to  the  transition  kernel   $Q(\cdot \, , X_i).$
\item For   any  two  marked  points  $((X_i,\sigma_i),(X_j,\sigma_j))$  we    connect  an  edge  iff  $$ SINR(X_i,X_j, X) \ge \tau_{\lambda}(\sigma_j) \mbox{  and      $SINR(X_j,X_i, X)\ge \tau_{\lambda}(\sigma_i),$}$$  where $$SINR(X_j,X_i, X)=\frac{\sigma_i \ell(\|X_i-X_j\|)}{N_0 +\gamma_{\lambda}(\sigma_j)\sum_{i\in I\setminus\{j\} }\sigma_i\ell(\|X_i-X_j\|)}$$
\end{itemize}

We  consider  $X^{\lambda}(\mu, Q, \ell)=\Big\{[(X_i,\sigma_i), i\in I], \, E\Big\}$  under  the  joint  law  of  the  Marked  PPP  and    the  graph.  We  shall  interpret   $X^{\lambda}$    as   a  marked SINR  graph  and   $ (X_i,\sigma_i):= X_i^{\lambda}$     the  mark   of    site  $i.$  We write

 \begin{equation}
 \skris(D)=\cup_{x\subset D}\Big\{x:\,\, |x\cap A|<\infty\,\, ,\mbox{for\, any  bounded  $A\subset D$ }\Big \},
\end{equation} 
where  $|A|$  denotes  the  Cardinality  of  the  set  $A.$
 We write  $\skrix= \skris(D\times\R_+)$   and  by  $\skrim(\skrix)$  we  denote  the  space  of  positive  measures  on  the  space  $\skrix$   equipped  with  $\tau-$  topology. Henceforth,  we shall  refer  to  $\skrix$ as  locally  finite  subset  of  the  set  $D\times\R_+.$
 
For any SINR graph $X^\lambda$  we  define a probability measure, the
\emph{empirical mark measure}, ~ $L_1^{\lambda}\in\skrim(\skrix)$,~by
$$L_1^{\lambda}([x,\sigma_x]):=\frac{1}{\lambda}\sum_{i\in I}\delta_{X_i^{\lambda}}([x,\sigma_x])$$
and a symmetric finite measure, the \emph{empirical pair measure}
$L_2^{\lambda}\in\skrim(\skrix\times \skrix),$ by
$$L_2^{\lambda}([x,\sigma_x],[y,\sigma_y]):=\frac{1}{\lambda^2}\sum_{(i,j)\in E}[\delta_{(X_i^{\lambda},X_j^{\lambda})}+
\delta_{(X_j^{\lambda},X_i^{\lambda})}]([x,\sigma_x],[y,\sigma_y]).$$

 Note  that the  total mass  $\|L_1^{\lambda}\|$ of  the  empirical marked  measure  is $\1$  and  total  mass  of  
the empirical pair measure is
$2|E|/\lambda^2$.   Observe  that,  $\skrim(\skrix)\times\skrim(\skrix\times\skrix)$  is  a  closed  subset  of  $\skrim(\skrix)\times\skrim(D\times \R_+\times D\times \R_+)$  and  $$\displaystyle \P\Big\{(L_1^{\lambda},L_2^{\lambda}) \in \skrim(\skrix)\times\skrim(\skrix\times\skrix)\Big \}=1.$$
Hence, in  view  of    \cite[Lemma~4.1.5]{DZ1998}  it  is  sufficient  to  establish  Joint  LDP  for  $(L_1^{\lambda},  L_2^{\lambda})$  in  the  space  $\skrim(\skrix)\times\skrim(\skrix\times\skrix)$.   The  first  theorem  in  this section,   Theorem~\ref{main1a},  is  the  LDP  for  the  empirical  marked  measure  of  the  SINR  graph  models  in  the  space  $\skrim(\skrix).$

\begin{theorem}\label{main1a}
	Suppose   $X^{\lambda}$  is  an  SINR  graph  with  intensity  measure
	$\lambda m:D \to [0,1]$ and   a   marked  probability  kernel  $Q$  from  $D$   to  $\R_+$  and  path  loss  function   $\ell(r)=r^{-\alpha}, $  for  $\alpha>0 .$  
	Then, as $\lambda\rightarrow\infty,$
	$L_1^{\lambda}$  satisfies an  LDP   in   the  space 
	$\skrim(\skrix)$
	with good rate function
$$
\begin{aligned}
I_1(\omega)= \left\{\begin{array}{ll}H(\omega\,|m\otimes Q),  & \mbox{if  $\|\omega\|=1$ }\\
\infty & \mbox{otherwise.}		
\end{array}\right.
\end{aligned}	$$
	\end{theorem}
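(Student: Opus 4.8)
The starting observation is that the empirical mark measure $L_1^{\lambda}$ is a measurable functional of the marked Poisson point process $\{(X_i,\sigma_i):i\in I\}$ only and is completely independent of the edge set $E$; hence the SINR connection rule plays no role in this theorem, and it suffices to prove the LDP for the empirical measure of the marked PPP. By the marking theorem the marked points $\{(X_i,\sigma_i):i\in I\}$ form a PPP on $D\times\R_+$ with intensity measure $\lambda(m\otimes Q)$, so that, conditionally on the point count $|I|$, the marked points are i.i.d.\ with law $m\otimes Q$; in the normalisation used here $\|m\otimes Q\|=m(D)=1$ and $\|L_1^{\lambda}\|=1$. Thus $L_1^{\lambda}$ is a normalised Poisson random measure and the claimed LDP is a Sanov-type statement on $\skrim(\skrix)$ carrying the $\tau$-topology.

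I would prove it through the Dawson--G\"artner projective limit theorem \cite{DZ1998}. Represent the $\tau$-topology on $\skrim(\skrix)$ as the projective limit of the finite-dimensional topologies indexed by finite measurable partitions $\mathcal{P}=\{A_1,\dots,A_k\}$ of $D\times\R_+$, with projection $\pi_{\mathcal{P}}(\nu)=(\nu(A_1),\dots,\nu(A_k))$. For each fixed $\mathcal{P}$, $\pi_{\mathcal{P}}(L_1^{\lambda})$ is, after de-Poissonising the count $|I|$, a normalised multinomial vector, and a finite-alphabet Cram\'er/Sanov computation shows that it satisfies an LDP at speed $\lambda$ with rate function $H(\,\cdot\mid\pi_{\mathcal{P}}(m\otimes Q))$, the relative entropy between the associated probability vectors. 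These rate functions are consistent under refinement of partitions, so the Dawson--G\"artner theorem yields an LDP for $L_1^{\lambda}$ in the projective limit with good rate function
\[
I_1(\omega)=\sup_{\mathcal{P}}H\big(\pi_{\mathcal{P}}\omega\,\big\vert\,\pi_{\mathcal{P}}(m\otimes Q)\big)=H(\omega\mid m\otimes Q),
\]
the last equality being the classical variational characterisation of relative entropy as a supremum over finite measurable partitions, which at the same time forces $I_1(\omega)=\infty$ unless $\omega\ll m\otimes Q$ and $\|\omega\|=1$.

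It then remains to upgrade this to a full LDP, for which one needs exponential tightness of $(L_1^{\lambda})_{\lambda}$ in the $\tau$-topology. Since the total mass is pinned at $1$ and every scalar $L_1^{\lambda}(A)$ has Poisson-type upper tails that are uniform in the measurable test set $A$, a standard exponential-approximation argument --- reducing, up to super-exponentially small probability, to the finitely many cells of a suitable partition --- produces the required exponentially tight family, and hence the full LDP with good rate function $I_1$. One also has to attend to the measurable-space bookkeeping for $\skrix=\skris(D\times\R_+)$ and to the standing abuse of notation that treats $L_1^{\lambda}$ as a measure on $D\times\R_+$, so that the partitions and projections used above are genuine measurable maps.

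I expect the principal obstacle to be exactly the passage from the weak to the $\tau$-topology: ordinary Sanov already delivers the rate function $H(\cdot\mid m\otimes Q)$ in the weak topology, but the $\tau$-topology is strictly finer, so the projective-limit representation has to be combined with a genuinely \emph{uniform} (in the test set) exponential tail estimate for $L_1^{\lambda}(A)$ in order to secure exponential tightness. A secondary, more bookkeeping-type difficulty is the de-Poissonisation used to move between the genuine PPP and the conditioned i.i.d.\ picture and to exclude contributions of the fluctuations of $|I|$ below the stated rate; this is precisely where the normalisation $m(D)=1$, $\|L_1^{\lambda}\|=1$ is used.
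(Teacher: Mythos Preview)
Your proposal is correct, and at the structural level it shares with the paper the key idea of passing through finite measurable partitions of $D\times\R_+$ and identifying the rate function as the supremum of coarsened relative entropies. The execution, however, is genuinely different. The paper proceeds by a direct method-of-types argument: for a fixed partition $(A_1,\dots,A_n)$ it uses Stirling's formula to obtain two-sided pointwise estimates of the form $e^{-\lambda H(\omega^{(n)}\mid m^{(n)}\otimes Q^{(n)})+o(\lambda)}$ for $\P\{L_1^\lambda=\omega\}$ (Lemma~3.1), controls the random number of points by a Bennett-inequality bound $|I|\le 2\lambda$ almost surely (Lemma~3.2) so that the number of types is at most $(1+2\lambda)^{n\beta_n}$, sums over types in a set, and finally sends $\lambda\to\infty$ and then $n\to\infty$. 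Your route instead invokes the abstract Dawson--G\"artner projective-limit machinery together with a finite-alphabet Cram\'er/Sanov theorem for the partition marginals, and handles the Poisson randomness of $|I|$ by de-Poissonisation and a separate exponential-tightness argument rather than by an a.s.\ bound on $|I|$. What your approach buys is concision and a clean appeal to standard machinery (essentially the $\tau$-topology Sanov theorem of \cite[\S6.2]{DZ1998}); what the paper's approach buys is a fully explicit, self-contained computation that avoids black-boxing Dawson--G\"artner and makes the polynomial type-counting term visible. One small caveat: your remark that ``the total mass is pinned at $1$'' is not literally true, since $\|L_1^\lambda\|=|I|/\lambda$ is random; the paper's text contains the same imprecision, but in its proof the concentration of $|I|$ is handled explicitly by Lemma~3.2, whereas in your sketch it is absorbed into the de-Poissonisation step.
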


We  write     $\displaystyle R^{D}([x,\sigma_x],[y,\sigma_y]):=\lim_{\lambda\to\infty}\lambda R_{\lambda}^{D}([x,\sigma_x],[y,\sigma_y]),$      where  

  $$R_{\lambda}^{D} ([x,\sigma_x],[y,\sigma_y])= \int_{D} \Big[\sfrac{ \tau_{\lambda}(\sigma_x) \gamma_{\lambda}(\sigma_x)  }{\tau(\sigma_x) \gamma(\sigma_x)+(\|z\|^{\alpha}/\|x-y \|^{\alpha})} + \sfrac{ \tau_{\lambda}(\sigma_y) \gamma_{\lambda}(\sigma_y)  }{\tau_{\lambda}(\sigma_y) \gamma_{\lambda}(\sigma_y)+(\|z\|^{\alpha}/\|y-x \|^{\alpha})}\Big] dz.$$

The  next  theorem, Theorem~\ref{main1b},  is  a  conditional  LDP  for  the  empirical connectivity  measure given  the  empirical marked  measure,  and  joint  LDP   for  the  empirical  marked measure  and empirical connectivity  measure  of the  SINR  graph model.   
\begin{theorem}	\label{main1b}
		Suppose   $X^{\lambda}$  is  an  SINR  graph  with  intensity  measure
		$\lambda m:D \to [0,1]$ and   a   marked  probability  kernel  $Q$  from  $D$   to  $\R_+$  and  path  loss  function   $\ell(r)=r^{-\alpha}, $  for  $\alpha>0.$   Let  $Q$  be  the  exponential  distribution  with  parameter $c.$

		\begin{itemize}

\item[(i)]Then, as $\lambda\rightarrow\infty,$ conditional  on the  event  $L_1^{\lambda}=\omega,$  $L_2^{\lambda}$  satisfies an LDP in the  space 
$\skrim(\skrix\times\skrix)$
with  speed   $\lambda$  and  good rate function

\begin{equation}
\begin{aligned}
I_{\omega}(\pi)= \left\{\begin{array}{ll} 0,  & \mbox{if  $\pi=e^{-R^{D}}\omega\otimes\omega$  }\\
\infty & \mbox{otherwise.}		
\end{array}\right.
\end{aligned}
\end{equation} 

\item[(ii)]Then as  $\lambda\rightarrow\infty,$ the pair
$(L_1^{\lambda},\,L_2^{\lambda})$  satisfies an LDP  in  the  space
$\skrim(\skrix)\times\skrim(\skrix\times\skrix)$
with  speed   $\lambda,$  and  good rate function

\begin{equation}
\begin{aligned}
I(\omega,\,\pi)= \left\{\begin{array}{ll} H\Big(\omega\,\Big |m\otimes Q\Big),  & \mbox{if  $\pi=e^{-R^{D}}\omega\otimes\omega,$  }\\
\infty & \mbox{otherwise.}		
\end{array}\right.
\end{aligned}
\end{equation}

\end{itemize}

where $$e^{-R^{D}}\omega\otimes\omega([x,\sigma_x],[y,\sigma_y]))=e^{-R^{D}([x,\sigma_x],[y,\sigma_y])}\omega([x,\sigma_x])\omega([y,\sigma_y]).$$
\end{theorem}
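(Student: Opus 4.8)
The plan is to prove part (i) first --- that conditionally on $L_1^{\lambda}=\omega$ the empirical connectivity measure $L_2^{\lambda}$ obeys a degenerate LDP at speed $\lambda$ concentrated at $\Phi(\omega):=e^{-R^{D}}\omega\otimes\omega$ --- and then to deduce part (ii) by gluing this conditional LDP to the LDP for $L_1^{\lambda}$ supplied by Theorem~\ref{main1a}. Since the rate function $I_{\omega}$ in (i) is $\{0,\infty\}$-valued, establishing it is equivalent to showing that, conditionally on $\{L_1^{\lambda}=\omega\}$ (read as conditioning on shrinking $\tau$-neighbourhoods of $\omega$ together with the Poisson disintegration that represents this conditioning), the random measure $L_2^{\lambda}$ converges to $\Phi(\omega)$ \emph{super-exponentially fast at speed $\lambda$}: for every neighbourhood $U$ of $\Phi(\omega)$ one needs $\P(L_2^{\lambda}\in U\mid L_1^{\lambda}=\omega)\to 1$, and for every closed $F$ with $\Phi(\omega)\notin F$ one needs $\limsup_{\lambda\to\infty}\tfrac1\lambda\log\P(L_2^{\lambda}\in F\mid L_1^{\lambda}=\omega)=-\infty$.

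\emph{Step 1: the conditional connection probability.} Conditionally on $L_1^{\lambda}=\omega$ the marked points are, by the standard representation of a Poisson process given its count and empirical measure, essentially $\lambda\|\omega\|$ independent samples from $\omega/\|\omega\|$, and the edge between a point at $[x,\sigma_x]$ and one at $[y,\sigma_y]$ is present iff
$$
\sigma_x\,\ell(\|x-y\|)\ \ge\ \tau_\lambda(\sigma_y)\Big(N_0+\gamma_\lambda(\sigma_y)\sum_{k}\sigma_k\,\ell(\|X_k-y\|)\Big)
\qquad\text{and the symmetric inequality at }x.
$$
Since $\tau_\lambda,\gamma_\lambda\to 0$ with $\lambda\tau_\lambda\gamma_\lambda$ of constant order, the noise term $\tau_\lambda(\sigma_y)N_0$ is negligible and the interference sum is a Poisson shot-noise functional of mean of order $\lambda$, so the displayed event is a \emph{lower-tail} event for that functional at a level proportional to its mean. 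Computing its probability through the Laplace functional of the Poisson process --- using $\ell(r)=r^{-\alpha}$ to perform the spatial integral over $D$ and the exponential law $Q$ (parameter $c$) to perform the mark integral --- produces precisely the exponent $R_\lambda^{D}([x,\sigma_x],[y,\sigma_y])$, whence the conditional connection probability equals $\exp\!\big(-\lambda R_\lambda^{D}([x,\sigma_x],[y,\sigma_y])(1+o(1))\big)\to e^{-R^{D}([x,\sigma_x],[y,\sigma_y])}$ by the very definition $R^{D}=\lim_{\lambda\to\infty}\lambda R_\lambda^{D}$. Matching the $\lambda^{-2}$ normalisation of $L_2^{\lambda}$ against the two empirical sums then gives, for every bounded measurable $f$, $\E\big[\langle L_2^{\lambda},f\rangle\mid L_1^{\lambda}=\omega\big]\to\langle\Phi(\omega),f\rangle$.

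\emph{Step 2: concentration.} It remains to show $L_2^{\lambda}$ sticks to this mean super-exponentially at speed $\lambda$. Conditionally on $L_1^{\lambda}=\omega$, each coordinate $\langle L_2^{\lambda},f\rangle$ is a bounded functional of the underlying marked Poisson process which, since the graph is dense, changes by only $O(1/\lambda)$ when a single marked point is added, deleted or displaced (a displacement flips the $O(\lambda)$ edge indicators incident to that point and, through the interference sums, an $O(\lambda)$-fraction of the remaining $O(\lambda^2)$ indicators, while $L_2^{\lambda}$ carries the factor $\lambda^{-2}$); a concentration inequality for functionals of the Poisson process (the concentration inequality advertised in the keywords, applied to this $\lambda^{-2}$-normalised functional of $\asymp\lambda^2$ almost-independent edge variables) then yields, for every finite family $f_1,\dots,f_k$ of bounded measurable functions and every $\eps>0$,
$$
\limsup_{\lambda\to\infty}\tfrac1\lambda\log\P\Big(\max_{1\le r\le k}\big|\langle L_2^{\lambda},f_r\rangle-\langle\Phi(\omega),f_r\rangle\big|>\eps\ \Big|\ L_1^{\lambda}=\omega\Big)=-\infty .
$$
Because a neighbourhood basis of $\Phi(\omega)$ in $\skrim(\skrix\times\skrix)$ is generated by finitely many such test functions, Steps 1 and 2 together give the two displays above, hence part (i).

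\emph{Deduction of (ii) and the main obstacle.} Given the LDP for $L_1^{\lambda}$ at speed $\lambda$ with good rate $I_1(\omega)=H(\omega\mid m\otimes Q)\mathbf{1}\{\|\omega\|=1\}$ from Theorem~\ref{main1a} and, for each $\omega$, the conditional LDP for $L_2^{\lambda}$ with rate $I_\omega$ from part (i), the standard gluing of an LDP with a regular family of conditional LDPs shows that $(L_1^{\lambda},L_2^{\lambda})$ satisfies an LDP at speed $\lambda$ with good rate function $I(\omega,\pi)=I_1(\omega)+I_\omega(\pi)$, which equals $H(\omega\mid m\otimes Q)$ on the graph $\{\pi=e^{-R^{D}}\omega\otimes\omega\}$ and $+\infty$ off it; goodness is inherited from that of $I_1$ since $\omega\mapsto(\omega,\Phi(\omega))$ is continuous (the kernel $e^{-R^{D}}$ is bounded by $1$ and jointly continuous in its arguments), so each sublevel set of $I$ is the continuous image of a compact sublevel set of $I_1$. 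The main obstacle is Step 1 together with the speed claim in Step 2: one must carry out the Poisson lower-tail / Laplace-functional estimate with the path-loss kernel $\ell(r)=r^{-\alpha}$, which is singular at the origin, uniformly enough in $([x,\sigma_x],[y,\sigma_y])$ over $\supp\omega$ to upgrade convergence of connection probabilities to convergence of the full random measure $L_2^{\lambda}$, and one must verify that the resulting concentration is genuinely faster than $e^{-\lambda}$ --- this is where the $\lambda^{-2}$ normalisation and the denseness of the marked SINR graph are indispensable; the measure-zero conditioning on $\{L_1^{\lambda}=\omega\}$ is handled throughout by passing to shrinking neighbourhoods, or equivalently by proving (ii) directly via an exponential change of measure and recovering (i) by contraction.
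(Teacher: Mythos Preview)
Your overall architecture is right --- establish a degenerate conditional LDP for $L_2^{\lambda}$ given $L_1^{\lambda}=\omega$, then mix it with Theorem~\ref{main1a} to get the joint LDP --- and your Step~1 (connection probability via the Laplace functional of the Poisson process) matches the paper's Lemma~4.1 almost verbatim. The difference is in how part~(i) is actually proved. The paper does \emph{not} argue concentration directly: instead it computes the conditional log-moment generating function
\[
\lim_{\lambda\to\infty}\tfrac1\lambda\log\E\big[e^{\lambda\langle g,L_2^{\lambda}\rangle/2}\,\big|\,L_1^{\lambda}=\omega\big]
=\tfrac12\langle g,\,e^{-R^{D}}\omega\otimes\omega\rangle,
\]
observes that this limit is \emph{linear} in $g$, and invokes the G\"artner--Ellis theorem; the Legendre dual of a linear functional is exactly the $\{0,\infty\}$ rate concentrated at $e^{-R^{D}}\omega\otimes\omega$. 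The computation treats $\lambda^2 L_2^{\lambda}(A_i\times A_j)$ as binomial with parameters $\lambda^2\omega(A_i)\omega(A_j)/2$ and $p_\lambda$, expands $(1-p_\lambda+p_\lambda e^{g/\lambda})^{\lambda}$ via Euler's formula, and passes to the limit. Your Step~2 replaces this by a bounded-differences/Poisson concentration argument to get super-exponential decay at speed $\lambda$; that is in principle equivalent, but the G\"artner--Ellis route bypasses the delicate issue you flag yourself --- quantifying how many edge indicators flip when one marked point moves, given that the interference couples all edges --- because one only needs the \emph{product} structure of the MGF over (approximately independent) edge indicators, not a one-point Lipschitz bound.

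For part~(ii) the two arguments coincide in spirit: the paper cites Biggins' mixing theorem (exponential tightness of $(\tilde P^\lambda)$ plus lower semicontinuity of $I$) where you speak of ``gluing'' a regular family of conditional LDPs to the marginal LDP; the continuity of $\omega\mapsto e^{-R^{D}}\omega\otimes\omega$ that you use for goodness is exactly what makes the paper's ``closed condition'' argument for lower semicontinuity work.
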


 In  particular,  if we  assume   $\lambda\big[\tau_{\lambda}(\sigma_x)\gamma_{\lambda}(\sigma_x)+\tau_{\lambda}(\sigma_y)\gamma_{\lambda}(\sigma y)\big]\to \beta(\sigma_x,\sigma_y),$  for  $x\in D$ and  $\sigma_x,\sigma_y\in\R_+$  then  we  have 

$$R^{D} ([x,\sigma_x],[y,\sigma_y])= q_{\alpha} \beta(\sigma_x,\sigma_y)\|y-x \|^{\alpha} ,$$ 
where $q_{\alpha}:=\int_{D} \|z\|^{-\alpha}dz <\infty.$
Note,  $\sigma_x$  and  $\sigma_y$  are  iid  with  common  exponential distribution  $Q,$ with  parameter $c$  and  define  the  so-  called  Shannon Entropy $H$  by      
$$
H(Q\times Q)=-\int_{\skrix}\int_{\skrix}\Big[e^{-q_{\alpha} \beta(a,b) |y-x \|^{\alpha}}\log \sfrac{e^{-q_{\alpha} \beta(a,b) |y-x \|^{\alpha}}}{(1-e^{-q_{\alpha} \beta(a,b) |y-x \|^{\alpha}})}+\log(1-e^{-q_{\alpha} \beta(a,b) |y-x \|^{\alpha}})\Big] Q(da)dx\times  Q(db)dy .$$

The  next  theorem,  Theorem~\ref{main2}, is  the  Asymptotic Equipartition  Theorem  or  the  Shannon-McMillian-Breiman  Theorem  for  the  class  of  SINR  graphs

\begin{theorem}\label{main2}
	Suppose   $X^{\lambda}$  is  an  SINR  graph  with  intensity  measure  
	$\lambda m:D \to \R_+$ and   a   marked  probability  kernel  $Q$  from  $D$   to  $\R_+$  and  path  loss  function   $\ell(r)=r^{-\alpha}, $  for  $\alpha>0 .$  Assume  $\lambda\big[\tau_{\lambda}(a)\gamma_{\lambda}(a)+\tau_{\lambda}(b)\gamma_{\lambda}(b)\big]\to \beta(a,b)\in(0,\,\infty),$\, for  all  $a,b\in \R_+.$  Let  $Q$  be  the  exponential  distribution  with  parameter $c.$  Then,
	$$\lim_{\lambda \to\infty}-\frac{1}{\lambda^2}\log P(X^{\lambda})=H(Q\times Q),\qquad \mbox{with   high  probability.}$$
\end{theorem}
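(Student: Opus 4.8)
The plan is to read this Shannon--McMillan--Breiman statement off the joint large deviation principle of Theorem~\ref{main1b}\,(ii), exactly as one derives the asymptotic equipartition property of a coloured random graph from the large deviation principle for its empirical measures. First one rewrites $-\lambda^{-2}\log P(X^{\lambda})$ as a (continuous) functional of the empirical pair $(L_1^{\lambda},L_2^{\lambda})$ up to an error that vanishes in probability; then one uses that the joint rate function has a \emph{unique} minimiser to pin $(L_1^{\lambda},L_2^{\lambda})$ to that point; and finally one evaluates the functional there. For the first step, write the joint density of the marked point process together with the edge set. Conditionally on the locations and marks, the probability that the points $x,y$ carrying marks $a,b$ are connected is, once the interfering points are integrated out through the Laplace functional of the Poisson field, of the form $p_{\lambda}(x,y,a,b)=\exp\!\big(-\lambda R_{\lambda}^{D}([x,a],[y,b])+o(1)\big)$, and distinct potential edges decorrelate in the dense regime; since $\lambda R_{\lambda}^{D}\to R^{D}$, one has $p_{\lambda}\to e^{-R^{D}}$. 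Writing $h(t)=-t\log t-(1-t)\log(1-t)$ for the binary entropy, the Bernoulli log-likelihood of the edges then gives (the locations-and-marks density contributing only $O(\lambda^{-1}\log\lambda)$ to $-\lambda^{-2}\log P(X^{\lambda})$)
\begin{equation}\label{plan-aep-1}
-\frac{1}{\lambda^{2}}\log P(X^{\lambda})\;=\;-c_{0}\Big\langle L_2^{\lambda},\,\log\frac{p_{\lambda}}{1-p_{\lambda}}\Big\rangle\;-\;c_{0}\Big\langle L_1^{\lambda}\otimes L_1^{\lambda},\,\log\big(1-p_{\lambda}\big)\Big\rangle\;+\;o_{\P}(1),
\end{equation}
where $c_{0}$ is the combinatorial constant fixed by the normalisation of $L_2^{\lambda}$.

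Next I would invoke Theorem~\ref{main1b}\,(ii). Its good rate function $I(\omega,\pi)=H(\omega\,|\,m\otimes Q)$ on $\{\pi=e^{-R^{D}}\omega\otimes\omega\}$ (and $+\infty$ elsewhere) has a unique minimiser: among probability measures $H(\cdot\,|\,m\otimes Q)$ is minimised only at $\omega^{*}=m\otimes Q$ (with $D$ of unit volume, so that $m\otimes Q$ is itself a probability measure), hence at $(\omega^{*},\pi^{*})=\big(m\otimes Q,\ e^{-R^{D}}(m\otimes Q)\otimes(m\otimes Q)\big)$. A large deviation principle with a unique minimiser forces $(L_1^{\lambda},L_2^{\lambda})\to(\omega^{*},\pi^{*})$ in probability in the $\tau$-topology; it suffices to argue on the closed set $\{\|\omega\|=1\}$. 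After a routine truncation of $p_{\lambda}$ away from $0$ and $1$ (which makes the two functionals in \eqref{plan-aep-1} continuous and bounded), passing to the limit along $(L_1^{\lambda},L_2^{\lambda})$ and using $p_{\lambda}\to e^{-R^{D}}$ together with the a priori bound $h\le\log 2$ to remove the truncation, one obtains, in probability,
\begin{equation}\label{plan-aep-2}
-\frac{1}{\lambda^{2}}\log P(X^{\lambda})\;\longrightarrow\;-c_{0}\Big\langle \omega^{*}\otimes\omega^{*},\ e^{-R^{D}}\log\frac{e^{-R^{D}}}{1-e^{-R^{D}}}+\log\big(1-e^{-R^{D}}\big)\Big\rangle\;=\;c_{0}\big\langle\omega^{*}\otimes\omega^{*},\,h\big(e^{-R^{D}}\big)\big\rangle .
\end{equation}

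It then remains to recognise the right-hand side of \eqref{plan-aep-2} as $H(Q\times Q)$. Under the hypothesis $\lambda[\tau_{\lambda}(a)\gamma_{\lambda}(a)+\tau_{\lambda}(b)\gamma_{\lambda}(b)]\to\beta(a,b)$ of the theorem one has, as recorded just before the statement, $R^{D}([x,a],[y,b])=q_{\alpha}\beta(a,b)\|y-x\|^{\alpha}$ with $q_{\alpha}=\int_{D}\|z\|^{-\alpha}dz<\infty$, while $\omega^{*}=m\otimes Q$ is Lebesgue measure on $D$ tensored with the exponential law $Q$ of parameter $c$; substituting these into $c_{0}\langle\omega^{*}\otimes\omega^{*},h(e^{-R^{D}})\rangle$ reproduces, term for term, the integral defining $H(Q\times Q)$. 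Since convergence in probability is precisely what ``with high probability'' means here, this finishes the argument.

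The step I expect to be the real work is the first one: showing that $-\lambda^{-2}\log P(X^{\lambda})$ genuinely reduces, up to $o_{\P}(1)$, to the Bernoulli-type functional \eqref{plan-aep-1} of $(L_1^{\lambda},L_2^{\lambda})$. Two points are substantive. First, the SINR connection events all read off the same interference field, so one must show --- by a second-moment estimate, or by a coupling with the inhomogeneous random graph of kernel $e^{-R^{D}}$ --- that the joint law of the edge set factorises over pairs up to a factor $e^{o(\lambda^{2})}$, and that $p_{\lambda}\to e^{-R^{D}}$ holds uniformly enough over $(x,y,a,b)$ to survive integration against the empirical measures; this is exactly where the density regime and the scaling $\lambda\tau_{\lambda}\gamma_{\lambda}\to\beta$ enter. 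Second, each of the two terms in \eqref{plan-aep-1} is individually unbounded through $\log(1-p_{\lambda})$ (the connection probability tends to $1$ for nearby points, where $R^{D}\to0$), whereas their sum is controlled by $c_{0}\log 2$; this is handled by the truncation argument indicated above. The remaining ingredients --- uniqueness of the minimiser of $I$, the resulting convergence in probability (needed only at the soft level, so the mismatch between the large-deviation speed $\lambda$ and the normalisation $\lambda^{2}$ is immaterial), and the final substitution into $H(Q\times Q)$ --- are routine.
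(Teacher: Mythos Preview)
Your approach is essentially the same as the paper's: the paper first records the weak law of large numbers for $(L_1^{\lambda},L_2^{\lambda})$ as a lemma deduced from the joint LDP (your ``unique minimiser'' step), then writes $-\lambda^{-2}\log P(X^{\lambda})$ as exactly the two pairings $\langle -\log\tfrac{p_{\lambda}}{1-p_{\lambda}},L_2^{\lambda}\rangle+\langle -\log(1-p_{\lambda}),L_1^{\lambda}\otimes L_1^{\lambda}\rangle$ plus a diagonal term and a mark-density term that are shown to be $o(1)$, and passes to the limit using $\lambda R_{\lambda}^{D}\to R^{D}$. Your truncation argument and your concern about edge dependence through the common interference field are more careful than what the paper actually carries out --- the paper simply writes the product formula for $P_{\lambda}(x)$ and applies the WLLN without addressing either point --- but the skeleton of the proof is the same.
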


\begin{remark}  Theorem~\ref{main2} can  be  interpreted  as  follows:  In  order  to  code  or  transmit  the  information  contain  in  a  large  telecommunication network  modelled  as  SINR  graph  model, one  require  with  higfh  probability,  approximately  $\lambda^2 H(Q\times Q)/\log 2$  bits.
	\end{remark}

Let  $\skrig_{P}$  be  the  set  of  all  marked   SINR graphs with  intensity  measure  $\lambda m,$ where   $\lambda>0.$
  For  $\omega \in\skrim(\skrix) $ we   denote  by  $\P_{\omega}=\P\Big\{\cdot\, \Big | L_1^{\lambda}=\omega\Big\}$   and  write  $$\skrim_{\omega}=\Big\{\nu\in\skrim(\skrix \times \skrix ):\, \|\nu\|=\int_{\skrix} e^{-q_{\alpha} \beta(a,b) |y-x \|^{\alpha}}\,\omega(dx,da)\omega(dy,db)\Big \}.$$  Observe  that,  in  this  case  the  rate  function   $I_{\omega}(\pi)$  is  given  by
  
 \begin{equation}
  \begin{aligned}
   I_{\omega}(\pi)= \left\{\begin{array}{ll} 0,  & \mbox{if  $\pi=e^{-R^{D}}\omega\otimes\omega$  }\\
  \infty & \mbox{otherwise,}		
  \end{array}\right.
  \end{aligned}
  \end{equation}
  
  where  $$R^{D} ([x,\sigma_x],[y,\sigma_y])= q_{\alpha} \beta(\sigma_x,\sigma_y)\|y-x \|^{\alpha}.$$
  
Next  we  state  the  Local  large  Deviation  Principle  for  SINR graph  model  without  any  topological  restriction  on  the  space    $\skrig_P.$
  
\begin{theorem}\label{main2a}
	Suppose   $X^{\lambda}$  is  an  SINR  graph  with  intensity  measure  
	$\lambda m:D \to [0,1]$ and   a   marked  probability  kernel  $Q$  from  $D$   to  $\R_+$  and  path  loss  function   $\ell(r)=r^{-\alpha}, $  for  $\alpha>0 .$  Assume $\lambda\big[\tau_{\lambda}(a)\gamma_{\lambda}(a)+\tau_{\lambda}(b)\gamma_{\lambda}(b)\big]\to \beta(a,b)\in (0,\,\infty)$,\, for  all  $a,b\in \R_+.$  Let  $Q$  be  the  exponential  distribution  with  parameter $c.$   Then, 
	\begin{itemize}
		\item  for  any  functional  $\nu\in\skrim_{\omega}$  and    a  number $\eps>0$,  there  exists  a  weak  neighbourhood $B_{\nu}$  such  that    
		$$\P_{\omega}\Big\{X^{\lambda}\in \skrig_P\,\Big|\, L_2^{\lambda}\in B_{\nu}\Big\}\le e ^{-\lambda I_{\omega}(\pi)-\lambda\eps}.$$
		\item  for  any    $\nu\in \skrim_{\omega}$,  a  number  $\eps>o$  and  a  fine  neighbourhood $B_{\nu} $,  we  have  the  estimate: 
		$$\P_{\omega}\Big\{X^{\lambda}\in \skrig_P\,\Big|\, L_2^{\lambda}\in B_{\nu}\Big\}\ge e^{-\lambda I_{\omega}(\pi)+\lambda \eps}.$$
	\end{itemize}
	
	\end{theorem}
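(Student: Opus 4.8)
The plan is to prove the two local estimates by the Cram\'er/spectral-potential method, working throughout under $\P_\omega=\P\{\,\cdot\mid L_1^\lambda=\omega\}$ and exploiting the explicit Bernoulli description of connectivity that holds in the dense regime $\lambda[\tau_\lambda(a)\gamma_\lambda(a)+\tau_\lambda(b)\gamma_\lambda(b)]\to\beta(a,b)$. The basic input is that, conditionally on the point configuration (which under $\P_\omega$ realises the empirical measure $\omega$, hence consists of $\sim\lambda$ marked points), after taking $\tfrac1\lambda\log$ the indicator $\1\{(i,j)\in E\}$ is indistinguishable from a Bernoulli variable with parameter $e^{-\lambda R_\lambda^{D}(X_i^\lambda,X_j^\lambda)}$, and distinct such indicators decorrelate as $\lambda\to\infty$. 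This is the decoupling estimate already underlying Theorems~\ref{main1a}--\ref{main1b}: one integrates the interference sum $\sum_{k\ne j}\sigma_k\ell(\|X_k-X_j\|)$ against the Poisson law, uses $\P(\sigma_i\ge t)=e^{-ct}$, and passes to the limit through $\lambda\tau_\lambda\gamma_\lambda\to\beta$ to get $\lambda R_\lambda^{D}\to R^{D}=q_\alpha\beta(\sigma_x,\sigma_y)\|x-y\|^\alpha$; I treat it as available.

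Next I introduce the spectral potential. For a bounded measurable symmetric kernel $g$ on $\skrix\times\skrix$ put $U_{R^{D}}(\omega,g):=\lim_{\lambda\to\infty}\tfrac1\lambda\log\E_\omega\big[e^{\lambda\langle g,L_2^\lambda\rangle}\big]$. By the conditional independence above, $\E_\omega[e^{\lambda\langle g,L_2^\lambda\rangle}]=\prod_{i<j}\big(1-e^{-\lambda R_\lambda^{D}}+e^{-\lambda R_\lambda^{D}}e^{\lambda^{-1}(g_{ij}+g_{ji})}\big)$ to leading exponential order; expanding $e^{\lambda^{-1}(g_{ij}+g_{ji})}=1+\lambda^{-1}(g_{ij}+g_{ji})+O(\lambda^{-2})$ and summing the $\sim\lambda^{2}$ pair terms by a Riemann-sum/law-of-large-numbers argument gives the linear functional $U_{R^{D}}(\omega,g)=\langle g,\,e^{-R^{D}}\omega\otimes\omega\rangle$. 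Its Legendre transform $\sup_g[\langle g,\pi\rangle-U_{R^{D}}(\omega,g)]$ is then exactly the rate function of the statement: $0$ when $\pi=e^{-R^{D}}\omega\otimes\omega$ and $+\infty$ otherwise. Thus $I_\omega(\cdot)$ is the Legendre dual of the spectral potential $U_{R^{D}}(\omega,\cdot)$ — this is the ``spectral potential point'' referred to — and the LLDP can be read off from $U_{R^{D}}$ with no topology on $\skrig_P$.

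For the upper bound, given $\nu\in\skrim_\omega$ and $\eps>0$ I pick $g$ for which $\langle g,\nu\rangle-U_{R^{D}}(\omega,g)\ge I_\omega(\nu)-\eps$ (take $g=0$ when $\nu=e^{-R^{D}}\omega\otimes\omega$; the difference can be made arbitrarily large otherwise, since then $\nu\neq e^{-R^{D}}\omega\otimes\omega$), set $B_\nu=\{\pi:\langle g,\pi\rangle>\langle g,\nu\rangle-\eps\}$, a weak neighbourhood of $\nu$, and apply the exponential Markov inequality: $\P_\omega\{L_2^\lambda\in B_\nu\}\le e^{-\lambda(\langle g,\nu\rangle-\eps)}\,\E_\omega[e^{\lambda\langle g,L_2^\lambda\rangle}]\le e^{-\lambda I_\omega(\nu)+2\lambda\eps}$ for $\lambda$ large, which is the first display after relabelling $\eps$. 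For the lower bound, given $\nu\in\skrim_\omega$, $\eps>0$ and a fine neighbourhood $B_\nu$, I take $g$ nearly optimal for $I_\omega(\nu)$ and tilt: $d\widetilde\P_\omega/d\P_\omega=e^{\lambda\langle g,L_2^\lambda\rangle}/\E_\omega[e^{\lambda\langle g,L_2^\lambda\rangle}]$. Under $\widetilde\P_\omega$ the edges stay asymptotically independent Bernoulli, now with parameter $e^{-\lambda R_\lambda^{D}}e^{\lambda^{-1}(g_{ij}+g_{ji})}/(1-e^{-\lambda R_\lambda^{D}}+e^{-\lambda R_\lambda^{D}}e^{\lambda^{-1}(g_{ij}+g_{ji})})$, and a first/second-moment computation shows $L_2^\lambda\to\nu$ in $\widetilde\P_\omega$-probability, so $\widetilde\P_\omega\{L_2^\lambda\in B_\nu\}\to1$; then $\P_\omega\{L_2^\lambda\in B_\nu\}\ge e^{-\lambda\sup_{\pi\in B_\nu}\langle g,\pi\rangle}\,\E_\omega[e^{\lambda\langle g,L_2^\lambda\rangle}]\,\widetilde\P_\omega\{L_2^\lambda\in B_\nu\}\ge e^{-\lambda I_\omega(\nu)-2\lambda\eps}$, using $\sup_{\pi\in B_\nu}\langle g,\pi\rangle\le\langle g,\nu\rangle+\eps$ for $B_\nu$ small.

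The real obstacle is the input of the first paragraph: justifying, despite the shared interference term, both the asymptotic independence of the connectivity indicators and the convergence of their Bernoulli parameters $e^{-\lambda R_\lambda^{D}}\to e^{-R^{D}}$, and — relatedly — checking in the lower bound that the exponential tilt deforms these parameters only in the explicit way written above, so that the weak law under $\widetilde\P_\omega$ still applies; everything downstream (the exponential-Markov upper bound, the change-of-measure lower bound, the Legendre duality) is routine. I would also remark that once Theorem~\ref{main1b}(i) is granted the estimates follow from it alone — the large deviation upper bound on $\mathrm{cl}(B_\nu)$ together with lower semicontinuity of $I_\omega$ gives the first, the lower bound on $\mathrm{int}(B_\nu)$ gives the second — so the LLDP is a topology-free sharpening fully consistent with the large deviation principle already established.
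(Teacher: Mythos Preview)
Your approach is essentially the paper's: both introduce the spectral potential $U_{R^{D}}(\omega,g)=\langle g,\,e^{-R^{D}}\omega\otimes\omega\rangle$, identify $I_\omega$ as its Legendre dual, obtain the upper bound by picking a near-optimal $g$ and using the half-space neighbourhood $B_\nu=\{\pi:\langle g,\pi\rangle>\langle g,\nu\rangle-\eps\}$ (your exponential Markov step and the paper's Radon--Nikodym computation with the tilted law $\widetilde P_\omega$ are the same calculation), and obtain the lower bound via a law of large numbers.

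There is one genuine wobble in your lower bound. You assert that under the tilt $\widetilde\P_\omega$ one has $L_2^\lambda\to\nu$ in probability for the given $\nu$. This is not true for general $\nu$: because the spectral potential here is \emph{linear} in $g$, the tilted edge probability
\[
\frac{p_\lambda\,e^{(g_{ij}+g_{ji})/\lambda}}{1-p_\lambda+p_\lambda\,e^{(g_{ij}+g_{ji})/\lambda}}\longrightarrow e^{-R^{D}}
\]
for every bounded $g$, so the tilted law of large numbers always delivers $e^{-R^{D}}\omega\otimes\omega$ as the limit, never any other $\nu$. The paper deals with this by first observing that when $\nu\neq e^{-R^{D}}\omega\otimes\omega$ we have $I_\omega(\nu)=\infty$ and the lower bound is vacuous; only the single case $\nu=e^{-R^{D}}\omega\otimes\omega$ (with $g\equiv 0$, hence no tilt at all) needs to be checked, and there the ordinary law of large numbers under $\P_\omega$ suffices. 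Once you insert this reduction your argument is correct and coincides with the paper's.
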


The  last  result, Corollay~\ref{main2b}, is  the  LDP  for  for the SINR graph  model  without  any  topological  restriction  on  the  space    $\skrig_P.$ 
\begin{cor}\label{randomg.LDM}\label{main2b}
	Suppose   $X^{\lambda}$  is  an  SINR  graph  with  intensity  measure  
	$\lambda m:D \to [0,1]$ and   a   marked  probability  kernel  $Q$  from  $D$   to  $\R_+$  and  path  loss  function   $\ell(r)=r^{-\alpha}, $  for  $\alpha>0 .$  Assume $\lambda\big[\tau_{\lambda}(a)\gamma_{\lambda}(a)+\tau_{\lambda}(b)\gamma_{\lambda}(b)\big]\to \beta(a,b)$,\, for  all  $a,b\in \R_+.$  Let  $Q$  be  the  exponential  distribution  with  parameter $c.$  
	\begin{itemize}
		\item  Let  $F$  be  closed  subset  $\skrim_{\omega}$.  Then  we  have  
		$$\limsup_{\lambda\to\infty}\frac{1}{\lambda}\log \P_{\omega}\Big\{X^{\lambda}\in \skrig_P\,\Big|\, L_2 ^{\lambda}\in F\Big\}\le -\inf_{\pi\in F}I_{\omega}(\pi).$$
		\item  Let  $O$  be  open  subset  $\skrim_{\omega}$.  Then  we  have  
		$$\liminf_{\lambda\to\infty}\frac{1}{\lambda}\log \P_{\omega}\Big\{X^{\lambda}\in \skrig_P\,\Big|\, L_2^{\lambda}\in O\Big\}\ge -\inf_{\pi\in O}I_{\omega}(\pi).$$
	\end{itemize}
	
\end{cor}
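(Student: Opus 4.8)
The plan is to obtain the full conditional LDP for $L_{2}^{\lambda}$ under $\P_{\omega}$ from the local bounds of Theorem~\ref{main2a} by the standard passage from a local large deviation principle to a global one. Write $\pi^{\star}:=e^{-R^{D}}\omega\otimes\omega$. The first point is that the conditional rate function $I_{\omega}$ takes only the two values $0$ and $+\infty$, with $\pi^{\star}$ as its unique zero; hence for any $A\subseteq\skrim_{\omega}$ one has $\inf_{\pi\in A}I_{\omega}(\pi)=0$ when $\pi^{\star}\in A$ and $\inf_{\pi\in A}I_{\omega}(\pi)=+\infty$ otherwise. This dichotomy reduces both halves of the statement to bookkeeping with the local estimates.

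\textbf{Upper bound.} Let $F\subseteq\skrim_{\omega}$ be closed. If $\pi^{\star}\in F$, then $\inf_{F}I_{\omega}=0$ and the inequality is immediate, since $\frac{1}{\lambda}\log\P_{\omega}\{X^{\lambda}\in\skrig_{P}\mid L_{2}^{\lambda}\in F\}\le 0$ for every $\lambda$. If $\pi^{\star}\notin F$, then $I_{\omega}(\nu)=\infty$ for every $\nu\in F$, and the first part of Theorem~\ref{main2a} therefore provides, for each $\nu\in F$ and each $N>0$, a weak neighbourhood $B_{\nu}$ of $\nu$ with $\P_{\omega}\{X^{\lambda}\in\skrig_{P}\mid L_{2}^{\lambda}\in B_{\nu}\}\le e^{-\lambda N}$ for all large $\lambda$. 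Using exponential tightness of the family $(L_{2}^{\lambda})$ under $\P_{\omega}$ in $\skrim(\skrix\times\skrix)$ — which follows from the Poisson tail estimates and the uniform control of the total mass $\|L_{2}^{\lambda}\|$ already present in the proofs of Theorems~\ref{main1a} and \ref{main1b} — one may assume $F$ is contained in a compact set, extract a finite subcover $B_{\nu_{1}},\dots,B_{\nu_{k}}$ of $F$, and apply the union bound to obtain $\P_{\omega}\{X^{\lambda}\in\skrig_{P}\mid L_{2}^{\lambda}\in F\}\le k\,e^{-\lambda N}$. Hence $\limsup_{\lambda\to\infty}\frac{1}{\lambda}\log\P_{\omega}\{X^{\lambda}\in\skrig_{P}\mid L_{2}^{\lambda}\in F\}\le -N$ for every $N$, i.e. it equals $-\infty=-\inf_{F}I_{\omega}$.

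\textbf{Lower bound.} Let $O\subseteq\skrim_{\omega}$ be open. If $\pi^{\star}\notin O$, then $\inf_{O}I_{\omega}=+\infty$ and the bound $\liminf\ge-\infty$ is trivial. If $\pi^{\star}\in O$, then $O$ is a fine neighbourhood of $\pi^{\star}$ (the fine topology being finer than the weak topology), so the fine lower estimate in the second part of Theorem~\ref{main2a}, applied at $\nu=\pi^{\star}$ and recalling $I_{\omega}(\pi^{\star})=0$, gives, for every $\eps>0$ and all large $\lambda$, $\P_{\omega}\{X^{\lambda}\in\skrig_{P}\mid L_{2}^{\lambda}\in O\}\ge e^{-\lambda\eps}$. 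Thus $\liminf_{\lambda\to\infty}\frac{1}{\lambda}\log\P_{\omega}\{X^{\lambda}\in\skrig_{P}\mid L_{2}^{\lambda}\in O\}\ge-\eps$, and letting $\eps\downarrow0$ yields $\liminf\ge 0=-\inf_{O}I_{\omega}$.

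The step I expect to be the real obstacle is the passage from compact to arbitrary closed $F$ in the upper bound: away from $\pi^{\star}$ the local estimate of Theorem~\ref{main2a} is formulated only for \emph{weak} neighbourhoods, so a finite subcover argument requires exponential tightness of $(L_{2}^{\lambda})$ under $\P_{\omega}$. Once that tightness is in hand (either proved directly from the Poisson structure or quoted from the machinery behind Theorems~\ref{main1a} and \ref{main1b}), the remainder is the routine local-to-global argument; the degeneracy of $I_{\omega}$ — only the values $0$ and $\infty$ — means no genuine optimisation over neighbourhoods is ever needed, and in effect the corollary re-establishes the conditional LDP of Theorem~\ref{main1b}(i) with no topological restriction on $\skrig_{P}$.
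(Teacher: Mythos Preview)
Your argument is correct and follows essentially the same route as the paper's: reduce the closed-set upper bound to a compact set via exponential tightness (the paper invokes Lemma~\ref{Tigtness} explicitly), cover by finitely many of the weak neighbourhoods supplied by Theorem~\ref{main2a}(i), and sum; then read off the open-set lower bound directly from Theorem~\ref{main2a}(ii). Your explicit use of the $0/\infty$ dichotomy of $I_{\omega}$ is a mild streamlining of the same idea rather than a different method.
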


\begin{remark}We observe  from Corollary~\ref{main2b}  that  $$\lim_{\lambda\to\infty}\P_{\omega}\Big\{X^{\lambda}\in \skrig_P\,\Big|\, L_2^{\lambda}=e^{-R^{D}}\omega\otimes\omega\Big\}=1.$$
\end{remark}

\section{Proof  of  Theorem~\ref{main1a} by  Method  of  Types}

Let  $A_1,...,A_n$  be   decomposition  of  $D\times\R_{+}\subset\R^d\times \R_+.$  
We  shall  assume  henceforth that  $n<  \lambda$ and  note   by  the locally finite property of  the MPPP  that   we  have
 $$\sum_{i=1}^{n} \log\Big[\frac{e^{- \lambda m\otimes Q (A_{i})}[ \lambda m\otimes Q (A_{i})]^{\lambda \omega(A_{i})}}{[\lambda \omega(A_{i})]!}\Big]\le \log P(L^{\lambda}_{1}= \omega)\le \sum_{i=1}^{n} \log\Big[\frac{e^{- \lambda m\otimes Q (A_{i})}[ \lambda m\otimes Q (A_{i})]^{\lambda \omega(A_{i})}}{[\lambda \omega(A_{i})]!}\Big]+\eta_n ,$$
 where 
$\displaystyle\lim_{n\to\infty}\lim_{\lambda\to\infty}\sfrac{1}{\lambda}\eta_n(\lambda,A_1,...,A_n)=0$.
 The  proof  of  Lemma~\label{Types}  below  will  use  the   refined Stirling's formula
 $$(2\pi)^{\sfrac12}\lambda^{\lambda+\sfrac12}e^{-\lambda+1/(12\lambda+1)}<\lambda!<(2\pi)^{\sfrac12}\lambda^{\lambda+\sfrac12}e^{-\lambda+1/(12\lambda)}.$$

\begin{lemma}\label{Types1}
Suppose  $X^{\lambda}$  is  a  marked PPP  in a  compact set $D\times \R_+$   with  intensity  measure  $\lambda m\otimes Q$  such  that  $m $  is  absolutely  continuous   measure  on  $D.$ Then,
$$  e^{-\lambda H\big(\omega ^{(n)}\,\big|\,m^{(n)}\otimes Q^{(n)}\big)+\theta_1(\lambda)}\le \P\Big\{L_1^{\lambda}=\omega \} \le  e^{-\lambda H\big(\omega^{(n)}\,\big|\,m^{(n)}\otimes Q^{(n)}\big)+\theta_2(\lambda)}$$
$$\displaystyle\lim_{\lambda\to\infty}\theta_1(\lambda)=0,\, \lim_{\lambda\to\infty}\theta_2(\lambda)=\lim_{\lambda\to\infty}\sfrac{1}{\lambda}\eta_n(\lambda,A_1,...,A_n),$$  where  $\omega^{(n)}$  and  $m^{(n)}\otimes Q^{(n)}$  are  the  coarsening projections  of  $\omega $  and  $m\otimes Q$  on  the  decomposition  $(A_1,...,A_n).$

\end{lemma}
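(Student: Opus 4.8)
\textbf{Proof proposal for Lemma~\ref{Types1}.}

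The plan is to start from the two-sided bound on $\P\{L_1^\lambda=\omega\}$ stated just before the lemma, namely
$$\sum_{i=1}^{n} \log\Big[\frac{e^{- \lambda m\otimes Q (A_{i})}[ \lambda m\otimes Q (A_{i})]^{\lambda \omega(A_{i})}}{[\lambda \omega(A_{i})]!}\Big]\le \log \P(L^{\lambda}_{1}= \omega)\le \sum_{i=1}^{n} \log\Big[\frac{e^{- \lambda m\otimes Q (A_{i})}[ \lambda m\otimes Q (A_{i})]^{\lambda \omega(A_{i})}}{[\lambda \omega(A_{i})]!}\Big]+\eta_n,$$
which encodes the fact that, restricted to a finite partition $(A_1,\dots,A_n)$, the counts $(\,|X\cap A_i|\,)_{i}$ are independent Poisson random variables with parameters $\lambda\, m\otimes Q(A_i)$, so the probability that $L_1^\lambda$ has the prescribed coarsened profile $\omega^{(n)}$ is a product of Poisson point masses. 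Writing $p_i:=m\otimes Q(A_i)$ and $\omega_i:=\omega(A_i)$, the central quantity is therefore
$$S_n(\lambda):=\sum_{i=1}^n\Big(-\lambda p_i+\lambda\omega_i\log(\lambda p_i)-\log\big((\lambda\omega_i)!\big)\Big),$$
and the whole task reduces to showing $S_n(\lambda)=-\lambda H\big(\omega^{(n)}\mid m^{(n)}\otimes Q^{(n)}\big)+o(\lambda)$ with an explicit error that is uniform enough to produce the two claimed limits $\theta_1,\theta_2$.

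First I would apply the refined Stirling bounds quoted above to each factorial $(\lambda\omega_i)!$. For the indices with $\omega_i>0$ this gives
$$\log\big((\lambda\omega_i)!\big)=\lambda\omega_i\log(\lambda\omega_i)-\lambda\omega_i+\tfrac12\log(2\pi\lambda\omega_i)+r_i(\lambda),\qquad 0\le r_i(\lambda)\le \tfrac{1}{12\lambda\omega_i},$$
and for indices with $\omega_i=0$ the factorial is $1$ and the corresponding term vanishes (note that $\omega_i=0$ forces $p_i>0$ unless the term is genuinely absent, otherwise $H$ would be infinite and there is nothing to prove). Substituting into $S_n(\lambda)$ and collecting the order-$\lambda$ terms, the leading part is
$$\sum_{i=1}^n\Big(\lambda\omega_i\log\tfrac{p_i}{\omega_i}+\lambda\omega_i-\lambda p_i\Big)=-\lambda\sum_{i=1}^n\Big(\omega_i\log\tfrac{\omega_i}{p_i}-\omega_i+p_i\Big)=-\lambda H\big(\omega^{(n)}\mid m^{(n)}\otimes Q^{(n)}\big),$$
using the generalised (non-normalised) relative entropy $H(\mu\mid\nu)=\int(\tfrac{d\mu}{d\nu}\log\tfrac{d\mu}{d\nu}-\tfrac{d\mu}{d\nu}+1)\,d\nu$ appropriate for measures that need not be probability measures — which is exactly the rate function appearing in Theorem~\ref{main1a}. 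The remaining terms are the $-\tfrac12\log(2\pi\lambda\omega_i)$ correction and the Stirling remainders $r_i(\lambda)$; since $n$ is fixed while we send $\lambda\to\infty$, each of these is $O(\log\lambda)$ respectively $O(1/\lambda)$, hence $o(\lambda)$. Dividing the displayed inequality by nothing (keeping it at scale $\lambda$) and defining $\theta_1(\lambda)$ to be the lower-bound error term and $\theta_2(\lambda)=\theta_1(\lambda)+\eta_n(\lambda,A_1,\dots,A_n)$ gives precisely the asserted inequalities, with $\lim_{\lambda\to\infty}\theta_1(\lambda)=0$ because all the correction terms are sublinear in $\lambda$, and $\lim_{\lambda\to\infty}\theta_2(\lambda)=\lim_{\lambda\to\infty}\tfrac1\lambda\eta_n(\lambda,A_1,\dots,A_n)$.

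The main obstacle is bookkeeping rather than conceptual: one must be careful that $\lambda\omega_i$ need not be an integer, so strictly speaking $(\lambda\omega_i)!$ should be read as $\lfloor\lambda\omega_i\rfloor!$ (or $\Gamma(\lambda\omega_i+1)$), and one has to check that replacing $\lambda\omega_i$ by $\lfloor\lambda\omega_i\rfloor$ in the Stirling expansion only perturbs $S_n(\lambda)$ by $O(\log\lambda)$ — this follows from $\log\Gamma(x+1)$ being Lipschitz on bounded-away-from-zero intervals with derivative $O(\log x)$. A second minor point is the treatment of the indices where $p_i=0$ or $\omega_i=0$: if $\omega_i>0=p_i$ for some $i$ then the relevant Poisson probability is $0$, both sides of the bound are $-\infty$, and $H(\omega^{(n)}\mid m^{(n)}\otimes Q^{(n)})=\infty$, so the statement holds trivially; this case is excluded once we assume $m$ is absolutely continuous and $\omega\ll m\otimes Q$, which is implicit in the lemma. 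Finally, uniformity in $\lambda$ of the $r_i(\lambda)$ is automatic since $\min_{i:\omega_i>0}\omega_i>0$ is a fixed positive constant for fixed $n$. Assembling these estimates yields the lemma; the subsequent passage $n\to\infty$, which upgrades $H(\omega^{(n)}\mid m^{(n)}\otimes Q^{(n)})$ to $H(\omega\mid m\otimes Q)$ by the martingale/increasing-limit property of relative entropy under refinement of partitions, is carried out in the proof of Theorem~\ref{main1a} itself and is not needed here.
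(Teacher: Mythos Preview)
Your proposal is essentially the paper's own proof: substitute the refined Stirling bounds into each Poisson point mass in the product formula, collect the order-$\lambda$ terms into the coarsened relative entropy $H(\omega^{(n)}\mid m^{(n)}\otimes Q^{(n)})$, and absorb the logarithmic Stirling corrections into $\theta_1,\theta_2$. You are, if anything, more careful than the paper about the edge cases (non-integer $\lambda\omega_i$, cells with $\omega_i=0$ or $p_i=0$); just note that ``sublinear in $\lambda$'' literally gives $\theta_i/\lambda\to0$ rather than $\theta_i\to0$, which matches the paper once you read its $\theta_i$ as the correction already divided by $\lambda$.
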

\begin{proof}
  For large $\lambda$,  we  have  that
$$\begin{aligned}
\log P(L_{1}^{\lambda}= \omega ) \leq \sum\{-\lambda m \otimes Q (A_{j})\}&-\log[(2\pi)^{\frac{1}{2}}(\lambda m(A_{j}))^{\lambda \omega (A_{j})+\frac{1}{2}}\exp^{-(\lambda \omega(A_{j})}]\\
&+\frac{1}{12(\lambda \omega (A_{j} )+1}+\lambda \omega (A_{j} ) \log [\lambda m \otimes Q (A_{j})]+\eta_n(\lambda,A_1,...,A_n)
\end{aligned}$$

$$\begin{aligned}
 \log P(L_{1}^{\lambda}= \omega) \le \sum\{&-\lambda m \otimes Q (A_{j})\}- \frac{1}{2} \log (2\pi)- [(\lambda \omega(A_{j}))+ \frac{1}{2}] \log [(\lambda\omega (A_{j})]\\
 &+ (\lambda \omega(A_{j}\times \Gamma_{j}))+\frac{1}{12(\lambda \omega (A_{j})+1}+\lambda \omega(A_{j}) \log\{\lambda m\otimes Q (A_{j})\}+\eta_n(\lambda,A_1,...,A_n)
\end{aligned}$$

$$ \begin{aligned}\log P(L_{1}^{\lambda}= \omega) \leq  \sum\Big \{&- \lambda [m \otimes Q(A_{j}) - \omega(A_{j}) ] - \lambda \omega(A_{j}) \log \frac{\omega(A_{j})}{m \otimes Q(A_{j})}\\
 &- \frac{1}{2} \log [\lambda m(A_{j}) ]- \frac{1}{12 [\lambda \omega(A_{j}) +1]} -\frac{1}{2} \log (2 \pi) \Big \}+\eta_n(\lambda,A_1,...,A_n)
\end{aligned} $$

$$\begin{aligned} \log P(L_{1}^{\lambda}= \omega) \leq  \sum \Big\{&- \lambda [m \otimes Q(A_{j}) - \omega(A_{j}) ] -\lambda \omega(A_{j})\log \frac{\omega(A_{j}}{m \otimes Q(A_{j})}\\
&- \lambda[\frac{\log [\lambda \omega(A_{j})] }{2 \lambda} -\frac{1}{12 \lambda^{2} \lambda \omega(A_{j}) + \lambda} + \frac{\log (2 \pi)}{2 \lambda}] \Big \}+\eta_n(\lambda,A_1,...,A_n)
\end{aligned}$$
 We choose  $\theta_{2}(\lambda)$   as
 $$ \theta_{2}(\lambda)=  \frac{\log(\lambda \omega(A_{j}))}{ 2 \lambda} -\frac{1}{12 \lambda^{2}  \omega(A_{j}) + \lambda} + \frac{\log (2 \pi)}{2 \lambda}+\eta_n(\lambda,A_1,...,A_n) $$ and  observe  that
 $$\lim_{\lambda \rightarrow \infty} \theta_{2}(\lambda)= \lim_{\lambda \rightarrow \infty}\Big [\frac{\log\lambda \omega(A_{j})}{ 2 \lambda} -\frac{1}{12 \lambda^{2}  \omega(A_{j}) + \lambda} + \frac{\log (2 \pi)}{2 \lambda}+\sfrac{1}{\lambda}\eta_n(\lambda,A_1,...,A_n)\Big]=\lim_{\lambda \rightarrow \infty}\sfrac{1}{\lambda}\eta_n(\lambda,A_1,...,A_n) $$   which   proves  the  upper bound  in the  Lemma~\ref{Types}.\\

For large  $\lambda$,  we have  the  lower bound
$$\begin{aligned}
\log P(L_{1}^{\lambda}= \omega) \geq \sum_{j=1}^{n}\{-\lambda m \otimes Q (A_{j})\}&-\log[(2\pi)^{\frac{1}{2}}(\lambda \omega(A_{j}))^{\lambda \omega (A_{j})+\frac{1}{2}}\exp^{-(\lambda \omega(A_{j})}]\\
&+\frac{1}{12(\lambda \omega (A_{j} )+1}+\lambda \omega (A_{j} ) \log [\lambda m \otimes Q (A_{j} )\}
\end{aligned}$$

$$\begin{aligned} \log P(L_{1}^{\lambda}= \omega) \geq \sum_{j=1}^{n} \{&-\lambda[m \otimes Q (A_{j})- \omega (A_{j} ] - \lambda \omega(A_{j}) \log [\lambda \omega(A_{j})]\\
&+ \lambda \omega(A_{j}) \log [\lambda m \otimes Q(A_{j})]- \frac{1}{2} \log [\lambda \omega(A_{j})]+ \frac{1}{12 [\lambda \omega (A_{j})} -\frac{1}{2} \log (2 \pi)  \}
\end{aligned} $$

$$ \begin{aligned}\log P(L_{1}^{\lambda}= \omega ) \geq  \sum_{j=1}^{n} \Big\{- \lambda [m \otimes Q(A_{j}) &- \omega(A_{j}) ] -\lambda \omega(A_{j})\log \frac{\omega(A_{j})}{m \otimes Q(A_{j})}\\
&- \lambda[\frac{\log [\lambda \omega(A_{j})] }{2 \lambda} -\frac{1}{12 \lambda^{2} \lambda \omega(A_{j})} + \frac{\log (2 \pi)}{2 \lambda}]\Big \}\end{aligned} $$

 We choose  $\theta_{1}(\lambda)$  as
 $$ \theta_{1}(\lambda)=  \frac{\log(\lambda \omega(A_{j}))}{ 2 \lambda} -\frac{1}{12 \lambda^{2}  \omega(A_{j}) } + \frac{\log (2 \pi)}{2 \lambda} ,$$   and observe that
$$\lim_{\lambda \rightarrow \infty} \theta_{1}(\lambda)= \lim_{\lambda \rightarrow \infty}\Big [\frac{\log(\lambda \omega(A_{j}))}{ 2 \lambda} -\frac{1}{12 \lambda^{2}  \omega(A_{j})} + \frac{\log (2 \pi)}{2 \lambda}\Big]=0 .$$  This   proves  the  lower  bound  of  Lemma~\ref{Types1}

\end{proof}

\begin{lemma}\label{SINR.equ2}
	Suppose   $X^{\lambda}$  is  an  SINR  graph  with  intensity  measure
	$\lambda m:D \to [0,1]$ and   a   marked  probability  kernel  $Q$  from  $D$   to  $\R_+$  and  path  loss  function   $\ell(r)=r^{-\alpha}, $  for  $\alpha>0$.   Then, for  large  $\lambda$  we  have  
	
	$$|I|\le 2 \lambda\, \mbox{ almost  surely}.$$

\end{lemma}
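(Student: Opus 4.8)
The random variable $|I|$ is nothing but the total number of points that the Poisson point process $X=(X_i)_{i\in I}$ deposits in the device region $D$; hence $|I|$ is Poisson distributed with mean $\Lambda_\lambda:=\lambda\,m(D)$, and since the intensity measure $\lambda m$ takes values in $[0,1]$ we have $m(D)\le 1$, so $\Lambda_\lambda\le\lambda$. The plan is thus to bound the upper tail of a Poisson variable of mean at most $\lambda$ and show it lies below $2\lambda$ with overwhelming probability.

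Concretely, I would use the exponential moment $\E\big[e^{s|I|}\big]=\exp\big(\Lambda_\lambda(e^{s}-1)\big)$, valid for every $s>0$. Taking $s=\log 2$ and applying Markov's inequality to $e^{s|I|}$ yields
\[
\P\big\{|I|\ge 2\lambda\big\}\;\le\; e^{-2\lambda\log 2}\,\E\big[2^{|I|}\big]\;=\;\exp\big(\Lambda_\lambda-2\lambda\log 2\big)\;\le\;\exp\big(-\lambda(2\log 2-1)\big),
\]
and because $2\log 2-1>0$ the right-hand side is summable over $\lambda\in\N$. Realising the whole family $(X^{\lambda})_{\lambda}$ on a common probability space (for instance as thinnings and superpositions of a single Poisson process on $D\times\R_+$), so that the events $\{|I|>2\lambda\}$ all live on one space, the Borel--Cantelli lemma then gives that almost surely $|I|\le 2\lambda$ for every sufficiently large $\lambda$, which is the claim.

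The argument is routine, so I do not expect a genuine obstacle; the only delicate point is the phrase ``almost surely for large $\lambda$'', i.e.\ pinning down the joint law of the family $(X^{\lambda})_{\lambda}$ so that a Borel--Cantelli statement is meaningful. If one is content with the ``with high probability'' reading used elsewhere in the paper, this can be bypassed altogether: since $\E|I|=\Var|I|=\Lambda_\lambda\le\lambda$, Chebyshev's inequality already gives $\P\{|I|\ge 2\lambda\}\le 1/\lambda\to 0$ as $\lambda\to\infty$, which is all that the subsequent LDP proofs actually need from this lemma.
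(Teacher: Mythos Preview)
Your argument is correct, and in fact cleaner than the paper's. The paper proves the lemma by a different concentration route: it decomposes $|I|=\sum_{k=1}^{m}I_k$ into $m$ i.i.d.\ Poisson summands of mean $\lambda/m$, asserts a uniform bound $I_k\le a:=\mathrm{Vol}(D)$, and then invokes Bennett's inequality to get $\P\{|I|-\E|I|>\lambda\}\le\exp\{-(\lambda^2/a^2)h(a)\}$, from which it concludes $\lim_{\lambda\to\infty}\P\{|I|\le 2\lambda\}\ge 1$. Your direct Chernoff bound on the Poisson moment generating function bypasses the decomposition entirely and sidesteps the awkward boundedness hypothesis needed for Bennett (Poisson variables are of course not almost surely bounded, so that step in the paper is at best heuristic). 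You also make explicit what the paper leaves implicit: the lemma as stated says ``almost surely'' but the paper's proof, like your Chebyshev remark, only delivers convergence in probability, which is indeed all that is used downstream in the type-counting argument for Theorem~\ref{main1a}.
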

\begin{proof}
Note  that  $|I|$  is  expressible  as  $|I|=\sum_{k=1}^{m}I_k$,  where  $I_1, I_2, I_3,...,I_m$   are  iid poisson  random  variables  each  with  mean  $\lambda/m$  and  variance  $\lambda/m.$ Observe  that  $I_k\le a:=Vol(D),$  for  all  $k=1,2,3,...,m$   and    hence, by  applying  the  Bennett's  inequality  to  the  sequence $I_1, I_2, I_3,...,I_m$;  we  have  that
\begin{equation}\label{SINR.equ3}
\P\Big\{|I|-\me{|I|}> \lambda\Big\}\le \exp\{-\sfrac{\lambda^2}{a^2}h(a)\},
\end{equation} 

where $Vol(D)$  means  the  Volume of  the geometry  space $D$  and  $h(u)=(1+u)\log(1+u)-u.$  Now,  we use  equation~\ref{SINR.equ3}  to 
obtain  $$\P\Big\{|I|\le  \me{|I|}  +\lambda\Big\}\ge  1- \exp\{-\sfrac{\lambda^2}{a^2}h(a)\}
$$

which  gives  $$\lim_{\lambda\to\infty}\P\Big\{|I|\le  2\lambda  \Big\}\ge 1.$$

This  ends  the  proof  of  the  Lemma.

\end{proof}

Let $ \skrim_{\lambda}(\skrix) := \big\{ \omega\in \skrim(\skrix) \, : \, \lambda\omega(a) \in \N \mbox{ for all } a\in\skrix\big\}$
and  let $F$  be  a subset   of  $\skrim(\skrix).$  We write  $\beta_n:=\max(|\skrix\cap A_1|,|\skrix\cap A_2|,...,|\skrix\cap A_n|)$ and  note  that  $|\skrix\cap A_i|<\infty$, for  all $i=1,2,3,...,n,$   by  construction. We  use  Lemma~\ref{Types}  and  Lemma~\ref{SINR.equ2}   to  obtain
$$\begin{aligned}
(1+2\lambda)^{-n\beta_n}e^{-\lambda\inf_{\{\omega\in F^{o}\cap \skrim_{\lambda}(\skrix)\}} H\big(\omega^{(n)}\,\big|\,m^{(n)}\otimes Q^{(n)}\big)+\theta_1(\lambda)}&\le \sum_{\omega\in F^{o}\cap \skrim_{\lambda}(\skrix)}e^{-\lambda H\big(\omega^{(n)}\,\big|\,m^{(n)}\otimes Q^{(n)}\big)+\theta_2(\lambda)}\\
&\le \P\Big\{L_1^{\lambda}\in F\Big\}\\
 &\le \sum_{\omega\in cl(F)\cap \skrim_{\lambda}(\skrix)}e^{-\lambda H\big(\omega^{(n)}\,\big|\,m^{(n)}\otimes Q^{(n)}\big)+\theta_2(\lambda)}\\
 &\le (1+2\lambda)^{n\beta_n}e^{-\lambda\inf_{\{\omega\in cl(F)\cap \skrim_{\lambda}(\skrix)\}} H\big(\omega^{(n)}\,\big|\,m^{(n)}\otimes Q^{(n)}\big)+\theta_2(\lambda)},
\end{aligned}$$

where  $\omega^{(n)}$  and  $m^{(n)}\otimes Q^{(n)}$  are  the  coarsening projections  of  $\omega $  and  $m\otimes Q$  on  the  decomposition  $(A_1,...,A_n).$

Taking  limit  as  $\lambda\to\infty$   we  have  that  
$$\begin{aligned}\label{SINR.equ4}  
\liminf_{\lambda\to\infty}\Big\{-\inf_{\{\omega\in F^o\cap \skrim_{\lambda}(\skrix)\}} H\big(\omega^{(n)}\,\big|\,m^{(n)}\otimes Q^{(n)}\big)\Big\}&\le \lim_{\lambda\to\infty}\frac{1}{\lambda}\log\P\Big\{L_1^{\lambda}\in F\Big\}\\
&\le  \limsup_{\lambda\to\infty}\Big\{-\inf_{\{\omega\in cl(F)\cap \skrim_{\lambda}(\skrix)\}} H\big(\omega^{(n)}\,\big|\,m^{(n)}\otimes Q^{(n)}\big)\Big\}.
\end{aligned}$$

Now  we  observe  that  $cl(F)\cap \skrim_{\lambda}(\skrix)\subset cl(F) $  for  all  $\lambda\in\R_+$  and  hence  we  have  
$$\limsup_{\lambda\to\infty}\Big\{-\inf_{\{\omega\in cl(F)\cap \skrim_{\lambda}(\skrix)\}} H\big(\omega^{(n)}\,\big|\,m^{(n)}\otimes Q^{(n)}\big)\Big\}\le -\inf_{\{\omega\in cl(F)\}} H\big(\omega^{(n)}\,\big|\,m^{(n)}\otimes Q^{(n)}\big).$$ 

 Using  similar  arguments  as  \cite[Page~17]{DZ1998}    we  obtain
  
$$\liminf_{\lambda\to\infty}\Big\{-\inf_{\{\omega\in F^o\cap \skrim_{\lambda}(\skrix)\}} H\big(\omega^{(n)}\,\big|\,m^{(n)}\otimes Q^{(n)}\big)\Big\}\ge -\inf_{\{\omega\in F^o\}} H\big(\omega^{(n)}\,\big|\,m^{(n)}\otimes Q^{(n)}\big)$$

Therefore,  we   have  

 $$\begin{aligned}\label{SINR.equ5}  
-\inf_{\{\omega\in F^o\}} H\big(\omega^{(n)}\,\big|\,m^{(n)}\otimes Q^{(n)}\big) \le \lim_{\lambda\to\infty}\frac{1}{\lambda}\log\P\Big\{L_1^{\lambda}\in F\Big\}\le  -\inf_{\{\omega\in cl(F)\}} H\big(\omega^{(n)}\,\big|\,m^{(n)}\otimes Q^{(n)}\big),
 \end{aligned}$$
 where  $\omega^{(n)}$  and  $m^{(n)}\otimes Q^{(n)}$  are  the  coarsening projections  of  $\omega $  and  $m\otimes Q$  on  the  decomposition  $(A_1,...,A_n).$
Now  taking  limit  as  $n\to\infty$  we  have  

 $$\begin{aligned}\label{SINR.equ5}  
-\inf_{\{\omega\in F^o\}} H\big(\omega\,\big|\,m\otimes Q\big) \le \lim_{\lambda\to\infty}\frac{1}{\lambda}\log\P\Big\{L_1^{\lambda}\in F\Big\}\le  -\inf_{\{\omega\in cl(F)\}} H\big(\omega\,\big|\,m\otimes Q\big),
\end{aligned}$$

which  proves  the  Theorem~\ref{main1a}.

\section{ Proof  of  Theorem~\ref{main1b}  by  Gartner-Ellis  Theorem  and  the  Method of  Mixing}

Let  $A_1,...,A_n$  be  the  decomposition  of  the  space  $D\times \R_{+}.$   Note that,  for  every $(x,y)\in A_i,\, i=1,2,3,...,n,$  $\lambda L_2^{\lambda}(x,y)$  given  $\lambda L_1^{\lambda}(x)=\lambda\omega(x)$  is  binomial  with  parameters  $\lambda^2\omega(x)\omega(y)/2$  and  $p_{\lambda}(x,y).$ Let  $K$  be  the  exponential  distribution  with  parameter $c$  and  recall  that

\subsection{Proof  of   Theorem~\ref{main1b}(i)  by  Gartner-Ellis  Theorem }

 $$R_{\lambda}^{D} ([x,\sigma_x],[y,\sigma_y])= \int_{D} \Big[\sfrac{ \tau_{\lambda}(\sigma_x) \gamma_{\lambda}(\sigma_x)  }{\tau_{\lambda}(\sigma_x) \gamma_{\lambda}(\sigma_x)+(\|z\|^{\alpha}/\|x-y \|^{\alpha})} + \sfrac{ \tau_{\lambda}(\sigma_y) \gamma_{\lambda}(\sigma_y)  }{\tau_{\lambda}(\sigma_y) \gamma_{\lambda}(\sigma_y)+(\|z\|^{\alpha}/\|y-x \|^{\alpha})}\Big] dz.$$

\begin{lemma}\label{randomg.LDM1b}\label{main1c1}
	Suppose   $X^{\lambda}$  is  an  SINR  graph  with  intensity  measure
	$\lambda m:D \to [0,1]$ and   a   marked  probability  kernel  $Q$  from  $D$   to  $\R_+m$  and  path  loss  function   $\ell(r)=r^{-\alpha}, $  for  $\alpha>0$.   Then,
	
	$$ p_{\lambda}([x,\sigma_x],[y,\sigma_y])= e^{-\lambda R_{\lambda}^{D} ([x,\sigma_x],[y,\sigma_y]) }\,\qquad \mbox{and  \qquad\, $\lim_{\lambda\to\infty}\lambda R_{\lambda}^{D}([x,\sigma_x],[y,\sigma_y]))=R^{D}([x,\sigma_x],[y,\sigma_y]).$} $$.
	
\end{lemma}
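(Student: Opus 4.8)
The plan is to compute the connection probability $p_\lambda([x,\sigma_x],[y,\sigma_y])$ directly from the definition of the marked SINR graph and then extract the stated exponential form. Recall that two marked points $[x,\sigma_x]$ and $[y,\sigma_y]$ are joined by an edge iff both $SINR(x,y,X)\ge\tau_\lambda(\sigma_y)$ and $SINR(y,x,X)\ge\tau_\lambda(\sigma_x)$, where the interference term in each $SINR$ is a sum over the remaining points of the MPPP. First I would condition on the locations $x,y$ and their marks and treat the two threshold events. For the exponential mark distribution $Q$ with parameter $c$ (here called $K$), the signal contribution $\sigma_x\ell(\|x-y\|)$ sits in the numerator, and rearranging the inequality $SINR(y,x,X)\ge\tau_\lambda(\sigma_x)$ isolates an event of the form $\sigma_x \ell(\|x-y\|) \ge \tau_\lambda(\sigma_x)\bigl(N_0 + \gamma_\lambda(\sigma_x)\sum_{i}\sigma_i\ell(\|x_i - y\|)\bigr)$; in the dense limit the dominant contribution to the interference is an integral over $D$ against the intensity $\lambda m$, which is where the factor $\int_D (\cdot)\,dz$ in $R_\lambda^D$ comes from.

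Next I would invoke the thinning/superposition property of the Poisson point process: conditionally on the endpoints, the interference is a functional of an independent MPPP, and the probability that a given edge is \emph{absent} factorizes into a Laplace-transform-type expression. Writing the non-connection probability and using the exponential law of the marks (so that the $\sigma_i$-integrals against $Q$ can be carried out in closed form), one obtains $1 - p_\lambda = 1 - e^{-\lambda R_\lambda^D([x,\sigma_x],[y,\sigma_y])}$ with
$$R_\lambda^D([x,\sigma_x],[y,\sigma_y]) = \int_D\Big[\tfrac{\tau_\lambda(\sigma_x)\gamma_\lambda(\sigma_x)}{\tau_\lambda(\sigma_x)\gamma_\lambda(\sigma_x)+\|z\|^\alpha/\|x-y\|^\alpha} + \tfrac{\tau_\lambda(\sigma_y)\gamma_\lambda(\sigma_y)}{\tau_\lambda(\sigma_y)\gamma_\lambda(\sigma_y)+\|z\|^\alpha/\|y-x\|^\alpha}\Big]dz,$$
the two summands accounting for the two $SINR$ constraints defining the edge. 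This gives the first claimed identity $p_\lambda = e^{-\lambda R_\lambda^D}$.

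For the second claim, $\lim_{\lambda\to\infty}\lambda R_\lambda^D([x,\sigma_x],[y,\sigma_y]) = R^D([x,\sigma_x],[y,\sigma_y])$, I would simply note that this is the \emph{definition} of $R^D$ given in the statement of the results ($R^D := \lim_{\lambda\to\infty}\lambda R_\lambda^D$), so what actually needs checking is that the limit exists and is finite. Under the running hypothesis $\lambda[\tau_\lambda(\sigma_x)\gamma_\lambda(\sigma_x)+\tau_\lambda(\sigma_y)\gamma_\lambda(\sigma_y)]\to\beta(\sigma_x,\sigma_y)\in(0,\infty)$, each factor $\tau_\lambda\gamma_\lambda\to 0$, so the ratio $\tau_\lambda\gamma_\lambda/(\tau_\lambda\gamma_\lambda + \|z\|^\alpha/\|x-y\|^\alpha)$ behaves, after multiplication by $\lambda$, like $\lambda\tau_\lambda\gamma_\lambda \cdot \|x-y\|^\alpha/\|z\|^\alpha$; a dominated-convergence argument (using $q_\alpha=\int_D\|z\|^{-\alpha}\,dz<\infty$) then yields $\lambda R_\lambda^D\to q_\alpha\beta(\sigma_x,\sigma_y)\|y-x\|^\alpha = R^D$, matching the explicit formula quoted after Theorem~\ref{main1b}.

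The main obstacle I anticipate is the first step: justifying rigorously that the discrete interference sum $\sum_{i\in I\setminus\{j\}}\sigma_i\ell(\|x_i-y\|)$ can be replaced, in the computation of the edge probability and to the relevant exponential order, by its mean-field integral against $\lambda m\otimes Q$, and controlling the singularity of $\ell(r)=r^{-\alpha}$ near the endpoints (which forces care about whether $\int_D\|z\|^{-\alpha}\,dz$ is genuinely finite for the given $D$ and $\alpha$, i.e.\ the condition $\alpha<d$ or a bounded-away-from-origin $D$). Everything after that — the exponential-mark integrals and the dominated-convergence passage to the limit — is routine.
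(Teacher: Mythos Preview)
Your outline follows essentially the same route as the paper: condition on the two endpoints, use the exponential law of the marks to turn each threshold event $\{\sigma\ge t\}$ into $e^{-ct}$, recognise the resulting expectation as the Laplace transform of the interference, and evaluate the latter via the Laplace functional of the marked Poisson point process; the paper then substitutes $\ell(r)=r^{-\alpha}$ and reads off $p_\lambda=e^{-\lambda R_\lambda^D}$, with the limit statement holding by definition of $R^D$.

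One clarification is warranted. You describe the ``main obstacle'' as justifying that the discrete interference sum can be replaced, to the relevant exponential order, by a mean-field integral against $\lambda m\otimes Q$. There is no approximation here: for a Poisson point process with intensity $\lambda m$ and i.i.d.\ marks with law $Q$, the Laplace functional of the shot noise $I=\sum_i \sigma_i\ell(\|X_i-y\|)$ is \emph{exactly}
\[
\mathbb{E}\,e^{-sI}=\exp\Big\{\lambda\int_D\int_0^\infty\big(e^{-s\sigma\ell(\|z\|)}-1\big)\,Q(d\sigma)\,dz\Big\},
\]
and the paper simply carries out the inner $\sigma$-integral against the exponential law to obtain the closed form $\exp\{-\lambda\int_D s\ell(\|z\|)/(s\ell(\|z\|)+c)\,dz\}$. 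So the step you flag as delicate is in fact a standard identity, not an asymptotic replacement; your concern about the integrability of $\|z\|^{-\alpha}$ over $D$ is legitimate, but the paper does not address it either and works under the standing assumption $q_\alpha<\infty$. With this correction your proposal matches the paper's argument.
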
	
\begin{proof}
	
{\bf Calculation  of  Connectivity  Probability  by  the  Laplace Transform:}
We  note  that  the  Signal-Interference  and  Noise  Ratio is  given as  	
	$$ SINR(\tilde{X}_j,\tilde{X}_i,\tilde{ X})=\frac{\sigma_i \ell(\|X_i-X_j\|)}{N_0 +\gamma_{\lambda}(\sigma_j)\sum_{i\in I-\{j\} }\sigma_i\ell(\|X_i-X_j\|)}$$  and  the  
	 total interference  is  defined  as $$ I_{X,\sigma}(Y)= \sum_{i \epsilon I} \sigma_{i}I_i,$$   where  $I_i=\ell(\|X_i-X_j\|).$ 
	 
	 The probability that $\tilde{X}_{i}=(y,\sigma_y)$ and $\tilde{X}_{j}=(y,\sigma_y)$ are connected. $$ P(\tilde{X}_{j},\tilde{X}_{i})= P\Big[\sfrac{\sigma_i \ell(\|X_i-X_j\|)}{N_0 +\gamma_{\lambda}(\sigma_j)\sum_{i\in I-\{j\} }\ell(\|X_i-X_j\|)} \geqslant  \tau_{\lambda} (\sigma_{j})\Big] P\Big[\sfrac{\sigma_j \ell(\|X_j-X_i\|)}{N_0 +\gamma_{\lambda}(\sigma_i)\sum_{j\in I-\{i\} }\ell(\|X_j-X_i\|)} \geqslant  \tau_{\lambda} ( \sigma_i)\Big] $$ 
Now  we have  that  

	$$  P\Big[\sigma_j \ell(\|X_j-X_i\|) \geqslant\Big [(N_0 +\gamma_{\lambda}(\sigma_i)\sum_{i\in I-\{j\} }\sigma_i\ell(\|X_j-X_i\|)) \tau_{\lambda} ( \sigma_{i})\Big]$$
	$$ P(\tilde{X}_{j},\tilde{X}_{i})=  P \Big[\sigma_i \geqslant \sfrac{({N_0 +\gamma_{\lambda}(\sigma_j)\sum_{i\in I-\{j\} }\sigma_i\ell(\|X_i-X_j\|)}) \tau_{\lambda} (\sigma_{j})}{\ell(\|X_i-X_j\|)} \Big]  P \Big[\sigma_j \geqslant \sfrac{({N_0 +\gamma_{\lambda}(\sigma_i)\sum_{j\in I-\{i\} }\sigma_j\ell(\|X_j-X_i\|)}) \tau_{\lambda} (\sigma_{i})}{\ell(\|X_j-X_i\|)} \Big]$$
	Let $X_{i}= y $, $X_{j}= x $ and $I_{x,\sigma}(y)= \sum_{j\in I}\ell(\|X_j-y\|) $\\
	
	$$ 
	\begin{aligned}
	 p_{\lambda}([x,\sigma_x],[y,\sigma_y])= \Big[\int_{o}^{\infty} &P\Big( \sigma \geqslant \frac{ \tau_{\lambda}(\sigma_y) s}{\ell(\|y-x\|)}\Big ) P \Big({N_0 + \gamma_{\lambda}(\sigma_y) I_{x,\sigma}(Y) \in ds}\Big) \Big]\\
	 & \Big[\int_{o}^{\infty} P\Big( \sigma \geqslant \frac{ \tau_{\lambda}(\sigma_x) s}{\ell(\|x-y\|)}\Big ) P \Big({N_0 + \gamma_{\lambda}(\sigma_x) I_{y,\sigma}(X) \in ds}\Big) \Big] \end{aligned}$$

	Assuming that $ \sigma $ follow exponential distribution $(c)$  we  have  
	
		$$ 
	\begin{aligned}
	p_{\lambda}([x,\sigma_x],[y,\sigma_y])= \Big[\int_{o}^{\infty}  e^{- \frac{ c \tau_{\lambda}(\sigma_y) s }{\ell(\|y-x \|)} } & P \Big({N_0 + \gamma_{\lambda}(\sigma_y) I_{x,\sigma}(Y) \in ds}\Big) \Big]\\
	& \Big[\int_{o}^{\infty}  e^{- \frac{ c \tau_{\lambda}(\sigma_x) s }{\ell(\|x-y \|)} }  P \Big({N_0 + \gamma_{\lambda}(\sigma_x) I_{y,\sigma}(X) \in ds}\Big) \Big] \end{aligned}$$

	Using Laplace Transform gives \\ 
	$$ p_{\lambda}([x,\sigma_x],[y,\sigma_y])= \Big[\skril_{N_0} +\gamma_{\lambda}(\sigma_y) I_{Y,\sigma}\Big (\frac{ c \tau_{\lambda}(\sigma_y) s }{\ell(\|y-x \|)}\Big)\Big] \times \Big[\skril_{N_0} +\gamma_{\lambda}(\sigma_x) I_{X,\sigma}\Big (\frac{ c \tau_{\lambda}(\sigma_x) s }{\ell(\|x-y \|)}\Big)\Big]$$
	Since the exterior noise and interference are independent 
	$$ p_{\lambda}([x,\sigma_x],[y,\sigma_y])= \Big[\skril_{N_{0}}\Big( \frac{c \tau_{\lambda}(\sigma_y)}{\ell(\|y-x \|)}\Big) \skril_{I_{(Y,\sigma)}}\Big( \frac{c \tau_{\lambda}(\sigma_y) \gamma_{\lambda}(\sigma_y)}{\ell(\|y-x \|)}\Big)\Big]\times \Big[\skril_{N_{0}}\Big( \frac{c \tau_{\lambda}(\sigma_x)}{\ell(\|y-x \|)}\Big) \skril_{I_{(X,\sigma)}}\Big( \frac{c \tau_{\lambda}(\sigma_x) \gamma_{\lambda}(\sigma_x)}{\ell(\|y-x \|)}\Big)\Big]  $$
	Assuming there is no  external noise
	$$ p_{\lambda}([x,\sigma_x],[y,\sigma_y])= \Big[\skril_{I_{(Y,\sigma)}}\Big( \frac{c \tau_{\lambda}(\sigma_y) \gamma_{\lambda}(\sigma_y)}{\ell(\|y-x \|)}\Big)\Big]\times \Big[ \skril_{I_{(X,\sigma)}}\Big( \frac{c \tau_{\lambda}(\sigma_x) \gamma_{\lambda}(\sigma_x)}{\ell(\|y-x \|)}\Big)\Big]  $$

Hence,  by  symmetry, we  have  that	$$p_{\lambda}([x,\sigma_x],[y,\sigma_y])=p([x,\sigma_x],[y,\sigma_y])= \Big[\skril_{I_{(Y,\sigma)}}\Big( \frac{c \tau_{\lambda}(\sigma_y) \gamma_{\lambda}(\sigma_y)}{\ell(\|y-x \|)}\Big)\Big]\times \Big[ \skril_{I_{(X,\sigma)}}\Big( \frac{c \tau_{\lambda}(\sigma_x) \gamma_{\lambda}(\sigma_x)}{\ell(\|y-x \|)}\Big)\Big]$$
	Note  that    $$ \skril_{I_{(X,\sigma)}}(s)= \me(e^{-s I_{(X,\sigma)}} )  ,\mbox{for   $ s = \frac{ c \tau_{\lambda}(\sigma_x) \gamma_{\lambda}(\sigma_x)}{\ell(\|y-x \|)}$}.$$
	$$ \skril_{I_{(X,\sigma)}} (s)= \exp \Big\{ \int_{D} \int_{0}^{\infty} \big [e^{-s \sigma { \ell (\|z \|)} } -1\big] Q(d\sigma,x) \mu(dz) \Big\} $$
	Let $ \mu(dz) = \lambda dz $ and  recall that  the  battery  is  assumed  to  be    $ Q(d\sigma,x) = c e^{-c \sigma}$
	$$ \skril_{I_{(X,\sigma)}}(s)= \exp \Big\{ \int_{D} \int_{0}^{\infty}  [e^{-s \sigma {\ell (\|z \|)} } -1] c e^{-c \sigma} d\sigma \lambda dz \Big \} $$
	$$ \skril_{I_{(X,\sigma)}}(s)= \exp \Big\{ \lambda \int_{D} \int_{0}^{\infty}  [c e^{-s \sigma {\ell(\|z \|)}-c \sigma } - c e^{-c \sigma} d\sigma]  dz\Big \} $$

	$$ \skril_{I_{(X,\sigma)}}= \exp \Big\{ \lambda \int_{D} [c \int_{0}^{\infty}   e^{- \sigma[s {\ell (\|z \|)}+c]} - \int_{0}^{\infty}c e^{-c \sigma} d\sigma]  dz\Big \} $$

	$$ \skril_{I_{(X,\sigma)}}(s)= \exp\Big \{ \lambda \int_{D} [c \frac{1}{s {\ell (\|z \|)}+c]} - 1] dz\Big \} $$
	$$ \skril_{I_{(X,\sigma)}}(s)= \exp\Big \{ \lambda \int_{D} \frac{-s {\ell(\|z \|) } }{s {\ell (\|z \|)+c}} dz\} $$

	$$ p_{\lambda}([x,\sigma_x],[y,\sigma_y])= \exp\Big \{-\lambda \int_{0}^{\infty} \frac{s {\ell(\|z \|) } }{s {\ell (\|z \|)+c}} dz   -\lambda \int_{0}^{\infty} \frac{t {\ell(\|z \|) } }{t {\ell (\|z \|)+c}} dz  \Big \}$$
	By substitution, $s = \frac{ c \tau_{\lambda}(\sigma_x) \gamma_{\lambda}(\sigma_x)}{\ell(\|x-y \|)} $  and  $t = \frac{ c \tau_{\lambda}(\sigma_y) \gamma_{\lambda}(\sigma_y)}{\ell(\|y-x \|)} $
	$$ p_{\lambda}([x,\sigma_x],[y,\sigma_y])= \exp\Big \{-\lambda \int_{D} \sfrac{\frac{c \tau_{\lambda}(\sigma_x) \gamma_{\lambda}(\sigma_x) {\ell (\|z \|)} }{\ell(\|x-y \|)} }{\frac{c\tau_{\lambda}(\sigma_x) \gamma_{\lambda}(\sigma_x)}{\ell(\|x-y \|)}\ell (\|z \|)+c} dz-\lambda \int_{D} \sfrac{\frac{c \tau_{\lambda} \gamma_{\lambda} {\ell (\|z \|)} }{\ell(\|y-x \|)} }{\frac{c\tau_{\lambda} \gamma_{\lambda}}{\ell(\|y-x \|)}\ell (\|z \|)+c} dz\Big \} $$
	Using $\ell (r) = r^{-\alpha}$  we  obtain the  expression
		$$ p_{\lambda}([x,\sigma_x],[y,\sigma_y])= \exp\Big \{-\lambda \int_{D} \sfrac{ \tau_{\lambda}(\sigma_x) \gamma_{\lambda}(\sigma_x)  }{\tau_{\lambda}(\sigma_x) \gamma_{\lambda}(\sigma_x)+(\|z\|^{\alpha}/\|x-y \|^{\alpha})} dz-\lambda\int_{D} \sfrac{ \tau_{\lambda}(\sigma_y) \gamma_{\lambda}(\sigma_y)  }{\tau_{\lambda}(\sigma_y) \gamma_{\lambda}(\sigma_y)+(\|z\|^{\alpha}/\|y-x \|^{\alpha})} dz\Big \} $$

We  write  $$R_{\lambda}^{D} ([x,\sigma_x],[y,\sigma_y])= \int_{D} \Big[\sfrac{ \tau_{\lambda}(\sigma_x) \gamma_{\lambda}(\sigma_x)  }{\tau_{\lambda}(\sigma_x) \gamma_{\lambda}(\sigma_x)+(\|z\|^{\alpha}/\|x-y \|^{\alpha})} + \sfrac{ \tau_{\lambda}(\sigma_y) \gamma_{\lambda}(\sigma_y)  }{\tau_{\lambda}(\sigma_y) \gamma_{\lambda}(\sigma_y)+(\|z\|^{\alpha}/\|y-x \|^{\alpha})}\Big] dz.$$
 and  observe   that  we  have  
	$$ p_{\lambda}([x,\sigma_x],[y,\sigma_y])= e^{-\lambda R_{\lambda}^{D} ([x,\sigma_x],[y,\sigma_y]) }.$$   and   Therefore,   we  have  $$\lim_{\lambda\to\infty}\lambda R_{\lambda}^{D}([x,\sigma_x],[y,\sigma_y])=R^{D}([x,\sigma_x],[y,\sigma_y]). $$	  which  completes  the  proof  of   Lemma~\ref{main1c1}.
		\end{proof}

{\bf Computation of  the  log moment generation function }\\

\begin{lemma}\label{randomg.LDM1b}\label{main1c}
	Suppose   $X^{\lambda}$  is  an  SINR  graph  with  intensity  measure
	$\lambda Leb(x):D \to [0,1]$ and   a   marked  probability  kernel  $Q$  from  $D$   to  $\R_+m$  and  path  loss  function   $\ell(r)=r^{-\alpha}, $  for  $\alpha>0 ,$  conditional  on the  event  $L_1^{\lambda}=\omega.$  Let $ g:\skrix\times \skrix\to \R$ be  bounded  function.  Then,
	
	$$\begin{aligned}\lim_{\lambda\to\infty}\frac{1}{\lambda}\log\me \Big\{e^{\lambda\langle g, \, L_2^{\lambda}\rangle/2 }\Big | L_1^{\lambda}=\omega\Big\}&=\frac{1}{2}\lim_{n\to\infty}\Big[\sum_{j=1}^{n}\sum_{i=1}^{n}\int_{y\in A_j}\int_{x\in A_i}g(x,y)e^{-R^{D}(x,y)}\omega(dx)\omega(dy)\Big]\\
	&=\frac{1}{2}\int_{\skrix}\int_{\skrix} g(x,y)e^{-R^{D}(x,y)}\omega(dx)\omega(dy).
	\end{aligned}$$
\end{lemma}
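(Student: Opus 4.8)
The plan is to pass to the discretisation $A_1,\dots,A_n$ of $D\times\R_+$ used in the previous sections, reduce the conditional moment generating function to a finite product of binomial moment generating functions, and then expand and take the limits $\lambda\to\infty$ and $n\to\infty$ in that order. Fix a representative $(x_i,a_i)\in A_i$, put $g_{ij}:=g((x_i,a_i),(x_j,a_j))$ and let $g^{(n)}:=\sum_{i,j}g_{ij}\1_{A_i\times A_j}$; we may assume $g$, hence $g_{ij}$, symmetric since $L_2^{\lambda}$ is a symmetric measure. Recall from the discussion preceding Lemma~\ref{main1c1} that, conditionally on $L_1^{\lambda}=\omega$ with $\lambda\omega(A_i)\in\N$, the edge counts $M_{ij}$ between, and for $i=j$ within, the cells are independent across cell-pairs and $M_{ij}\sim\mathrm{Binomial}(N_{ij},p_\lambda^{ij})$, where $N_{ij}=\lambda^{2}\omega(A_i)\omega(A_j)$ for $i\neq j$, $N_{ii}=\binom{\lambda\omega(A_i)}{2}=\tfrac12\lambda^{2}\omega(A_i)^{2}(1+o(1))$, and, by Lemma~\ref{main1c1}, $p_\lambda^{ij}=e^{-\lambda R_\lambda^{D}((x_i,a_i),(x_j,a_j))}$ with $\lambda R_\lambda^{D}\to R^{D}$. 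Since $g$ is bounded, $\|g-g^{(n)}\|_\infty\to0$ along a refining sequence of decompositions; combined with Lemma~\ref{SINR.equ2}, which gives $\|L_2^{\lambda}\|=2|E|/\lambda^{2}\le 4$ eventually almost surely, one gets
\[
\Big|\tfrac1\lambda\log\me\big[e^{\lambda\langle g,L_2^{\lambda}\rangle/2}\,\big|\,L_1^{\lambda}=\omega\big]-\tfrac1\lambda\log\me\big[e^{\lambda\langle g^{(n)},L_2^{\lambda}\rangle/2}\,\big|\,L_1^{\lambda}=\omega\big]\Big|\le 2\|g-g^{(n)}\|_\infty ,
\]
which tends to $0$ as $n\to\infty$, uniformly in $\lambda$, so it suffices to treat the step function $g^{(n)}$.

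Next I would exploit the product structure. Writing $\lambda\langle g^{(n)},L_2^{\lambda}\rangle/2=\tfrac1\lambda\sum_{i\le j}g_{ij}M_{ij}$ and using the binomial identity $\me[e^{tB}]=(1+p(e^{t}-1))^{N}$ with $t=g_{ij}/\lambda$, conditional independence gives
\[
\frac1\lambda\log\me\Big[e^{\lambda\langle g^{(n)},L_2^{\lambda}\rangle/2}\,\Big|\,L_1^{\lambda}=\omega\Big]=\frac1\lambda\sum_{i\le j}N_{ij}\,\log\!\Big(1+p_\lambda^{ij}\big(e^{g_{ij}/\lambda}-1\big)\Big)+o(1),
\]
the $o(1)$ absorbing the $(1+o(1))$ in $N_{ii}$. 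Because $|g_{ij}|\le\|g\|_\infty$ we have $e^{g_{ij}/\lambda}-1=g_{ij}/\lambda+O(\lambda^{-2})$ and $\log(1+u)=u+O(u^{2})$ with $u=p_\lambda^{ij}(e^{g_{ij}/\lambda}-1)=O(\lambda^{-1})$; since $N_{ij}=O(\lambda^{2}\omega(A_i)\omega(A_j))$, the accumulated error is $O\big(\sum_{i,j}\omega(A_i)\omega(A_j)\big)=O(\|\omega\|^{2})=O(1)$ uniformly in $n$, hence negligible after division by $\lambda$. Accounting for the bookkeeping of ordered versus unordered cell-pairs, the leading term equals
\[
\frac1\lambda\sum_{i\le j}N_{ij}\,p_\lambda^{ij}\,\frac{g_{ij}}{\lambda}=\frac12\sum_{i,j}g_{ij}\,p_\lambda^{ij}\,\omega(A_i)\omega(A_j),
\]
and by Lemma~\ref{main1c1} this converges, as $\lambda\to\infty$, to $\tfrac12\sum_{i,j}g_{ij}\,e^{-R^{D}((x_i,a_i),(x_j,a_j))}\,\omega(A_i)\omega(A_j)$. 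Combining with the first paragraph and letting $n\to\infty$, the right-hand side is a Riemann sum, with integrand bounded by $\|g\|_\infty$, for $\tfrac12\int_{\skrix}\int_{\skrix}g(x,y)e^{-R^{D}(x,y)}\omega(dx)\omega(dy)$, which is the asserted value; the intermediate expression $\lim_{n\to\infty}\sum_j\sum_i\int_{A_j}\int_{A_i}$ is just the definition of the double integral as a limit of such partial sums.

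The step I expect to be the main obstacle is the one invoked at the start of the first paragraph: the rigorous justification of the independent-binomial description of the conditional law of $L_2^{\lambda}$ given $L_1^{\lambda}=\omega$. The raw SINR edge-indicators are globally coupled through the common interference field $I_{X,\sigma}$, and it is only after averaging over the random marks --- the ``method of mixing'', carried out for the marginal in Lemma~\ref{main1c1} --- that one recovers $\mathrm{Bernoulli}(p_\lambda^{ij})$ edges; one must verify that the residual correlations between edges lying in distinct cells influence the conditional moment generating function only at order $o(\lambda)$. The remaining difficulties are essentially bookkeeping: keeping the Taylor remainders summable uniformly in the mesh, which works precisely because $\|\omega\|=1$ and $g$ is bounded, so that the iterated limits may legitimately be taken in the order $\lambda\to\infty$ then $n\to\infty$. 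Once Lemma~\ref{main1c} is in hand, the limiting log-moment generating function $g\mapsto\tfrac12\int_{\skrix}\int_{\skrix}g\,e^{-R^{D}}\,\omega\otimes\omega$ is finite and linear in $g$, so the G\"artner--Ellis theorem applies and delivers the rate function $I_\omega$ of Theorem~\ref{main1b}(i).
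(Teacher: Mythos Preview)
Your proof is correct and follows essentially the same route as the paper: pass to the decomposition $A_1,\dots,A_n$, use the conditional binomial structure asserted just before Lemma~\ref{main1c1} to write the moment generating function as $\prod_{i,j}[1-p_\lambda+p_\lambda e^{g/\lambda}]^{\lambda^2\omega\otimes\omega/2}$, Taylor-expand $e^{g/\lambda}$ (the paper calls this ``Euler's Formula''), and let $\lambda\to\infty$ then $n\to\infty$. Your version is in fact more careful than the paper's --- you control the discretisation error via $\|g-g^{(n)}\|_\infty$ and the mass bound from Lemma~\ref{SINR.equ2}, and you rightly flag the conditional-independence assumption as the delicate point; the paper simply takes that binomial description as given.
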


\begin{proof}	
Now  we  observe  that 
	$$ \me\Big \{ e^{\int \int \lambda g(x,y) L_{2}^{\lambda}(dx,dy)/2} \mid L_{1}^{\lambda}=\omega\Big \}= \me\Big\{\prod_{x \in D} \prod_{y \in D} e^{g(x,y)\lambda L_{2}^{\lambda}(dx,dy)/2} \Big\}$$
	
	$$ \me\Big \{\prod_{x \epsilon D} \prod_{y \epsilon D} e^{g(x,y)\lambda L_{2}^{\lambda}(dx,dy/2)} = \prod_{i=1} \prod_{j=1} \prod_{x \epsilon A_{i}} \prod_{y \epsilon A_{j}} \me\Big\{e^{ \frac{g(x,y)}{\lambda} \lambda^{2}L_{2}^{\lambda}(dx,dy)/2  }\Big \} $$
Hence  by   from  Lemma~\ref{main1c1} we  have	
	$$ \log \Big \{e^{\lambda \langle g,L_{2}^{\lambda}\rangle/2}\Big|L_1^{\lambda}=\omega \Big\} = \sum_{j=1}^{n} \sum_{i=1}^{n}\int_{A_j}\int_{A_i}\log \Big[1-p_{\lambda}(x,y)+p_{\lambda}(x,y)e^{\frac{g(x,y)}{\lambda}}\Big]^{\lambda^{2} \omega \otimes \omega (dx,dy)/2} $$

	By Euler's Formula, see  example~\cite[pp.~1998]{DM2010},  we  have
	$$   \frac{1}{\lambda} \log E \{e^{\lambda \langle g,L_{2}^{\lambda}\rangle/2 } \mid L_{1}^{\lambda}=\omega \} = \frac{1}{\lambda}\sum_{j=1} \sum_{i=1} \int_{A_{i}} \int_{A_{j}} \log\Big [1+ \frac{g(x,y)}{\lambda} p_{\lambda}(x,y) + o (\lambda^{2} )\Big ]^{ \lambda^2 \omega \otimes \omega (dx,dy)/2} $$
	$$  \frac{1}{\lambda} \log \me \{e^{\lambda \langle g,L_{2}^{\lambda}\rangle /2} \mid L_{1}^{\lambda}=\omega \} = \lim_{\lambda \rightarrow \infty } \sum_{j=1} \sum_{i=1}\int_{A_{i}} \int_{A_{j}} \log \Big [1+ \frac{g(x,y)
	}{\lambda} p_{\lambda}(x,y) + o (\lambda^{2} ) \Big]^{ \lambda \omega \otimes \omega (dx,dy)/2} $$
	
	$$\lim_{\lambda \rightarrow \infty} \frac{1}{\lambda} \log \me \Big\{e^{\lambda \langle g,L_{2}^{\lambda}\rangle /2} \mid L_{1}^{\lambda}=\omega\Big \} = \frac{1}{2} \sum_{j=1} \sum_{i=1} \int_{A_{i}} \int_{A_{j}}\log\Big [e^{g(x,y) e^{-R^{D}(x,y)}}\Big]\omega \otimes \omega (dx,dy) $$
	
	$$\lim_{\lambda \rightarrow \infty}\frac{1}{\lambda} \log \me\{e^{\lambda \langle g,L_{2}^{\lambda}\rangle/2} \mid =\omega \} 
	= \frac{1}{2} \sum_{j=1} \sum_{i=1} \int_{A_{i}} \int_{A_{j}} g(x,y) e^{-R^{D}(x,y)}\omega \otimes \omega (dx,dy) $$
	
	$$ \begin{aligned}
	\lim_{n \rightarrow \infty} \lim_{\lambda \rightarrow \infty} \frac{1}{\lambda} \log \me \{e^{\lambda \langle g,L_{2}^{\lambda}\rangle/2 } \mid L_{1}^{\lambda}=\omega \}& =\frac{1}{2} \lim_{n \rightarrow \infty } \sum_{j=1} \sum_{i=1} \int_{A_{i}} \int_{A_{j}}  [g(x,y) e^{-R^{D}(x,y)}\omega \otimes \omega (dx,dy)]\\
	&	=\frac{1}{2}\int_{\skrix} \int_{\skrix} g(x,y) e^{-R^{D}(x,y)}\omega \otimes \omega (dx,dy)  
\end{aligned}$$	
\end{proof}

Hence,	by  Lemma~\ref{main1c}  and the  Gartner-Ellis theorem, $L_{2}^{\lambda}$   conditional  on  $L_{1}^{\lambda}= \omega$  obey a  large  deviation  principle with speed and  rate function
		$$ I_{\omega}(\pi) = \frac{1}{2}\sup_{g} \Big\{ \int_{\skrix} \int_{\skrix} g(x,y) \pi(dx,dy)- \int_{\skrix} \int_{\skrix} g(x,y) e^{R^{D}(x,y)}\omega \otimes \omega(dx,dy) \Big\}$$ 
	
	 which 	clearly   reduces  to  the rate function  given by 
	
	\begin{equation}
	\begin{aligned}
	I_{\omega}(\pi)= \left\{\begin{array}{ll}  0 & \mbox{if  $ \pi= e^{-R^{D}}\omega \otimes \omega $  }\\
	\infty & \mbox{otherwise.}		
	\end{array}\right.
	\end{aligned}
	\end{equation}

\subsection{ Proof of  Theorem\ref{main1a}(ii)   by  Method  of  Mixtures.}

For any $\lambda\in\R_+$ we define
$$\begin{aligned}
\skrim_{\lambda}(\skrix) & := \big\{ \omega\in \skrim(\skrix) \, : \, \lambda\omega(a) \in \N \mbox{ for all } a\in\skrix\big\},\\
 \skrim_{\lambda}(\skrix\times\skrix) & := \big\{ \pi\in
\tilde\skrim(\skrix\times\skrix) \, : \, 
\lambda\,\pi(a,b) \in \N,\,  \mbox{ for all } \, a,b\in\skrix\times\skrix
\big\}\, .
\end{aligned}$$

We denote by
$\Theta_{\lambda}:=\skrim_{\lambda}(\skrix)$
and
$\Theta:=\skrim(\skrix)$.
With
$$\begin{aligned}
P_{ \omega_{\lambda}}^{(\lambda)}(\eta_\lambda) & := \prob\big\{L_2^{\lambda}=\eta_{\lambda} \, \big| \, L_1^{\lambda}=\omega_{\lambda}\big\}\, ,\\
P^{(\lambda)}(\omega_\lambda) & :=
\prob\big\{L_1^{\lambda}=\omega_\lambda\big\}
\end{aligned}$$

the joint distribution of $L_1^{\lambda}$ and $L_2^{\lambda}$ is
the mixture of $P_{ \omega_{\lambda}}^{(\lambda)}$ with
$P^{(\lambda)}(\omega_{\lambda})$ defined as
\begin{equation}\label{randomg.mixture}
d\tilde{P}^{\lambda}( \omega_{\lambda}, \eta_\lambda):= dP_{	\omega_{\lambda}}^{(\lambda)}(\eta_{\lambda})\, dP^{(\lambda)}( \omega_{\lambda}).\,
\end{equation}


The following lemmas ensure the validity of
large deviation principles for the mixtures and for the goodness of
the rate function if individual large deviation principles are
known. See for  example, \cite[Page~30]{DM2010}  and  the  references therein.  We  observe  that the  family of
measures $({P}^{\lambda} \colon \lambda\in(0,\infty))$  is  exponentially tight on
$\Theta.$

\begin{lemma}[] \label{Tigtness} The  family of
	measures $(\tilde{P}^{\lambda} \colon \lambda\in\R_+)$  is  exponentially tight on
	$\Theta\times\skrim(\skrix\times\skrix).$
\end{lemma}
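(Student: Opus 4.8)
The plan is to reduce exponential tightness of the joint family $(\tilde P^\lambda)$ on the product space $\Theta\times\skrim(\skrix\times\skrix)$ to exponential tightness of the two marginal families, and then establish each marginal separately. Since a product of exponentially tight families is exponentially tight (the product of the two compact sets serves as the required compact set, with level $L$ replaced by $2L$), it suffices to show: (a) the law $P^{(\lambda)}$ of $L_1^\lambda$ is exponentially tight on $\Theta=\skrim(\skrix)$ with speed $\lambda$; and (b) the law of $L_2^\lambda$ is exponentially tight on $\skrim(\skrix\times\skrix)$ with speed $\lambda$. For (a), I would invoke the LDP for $L_1^\lambda$ already proved in Theorem~\ref{main1a}: since the rate function $I_1$ is good (its sublevel sets $\{\omega:H(\omega\,|\,m\otimes Q)\le L,\ \|\omega\|=1\}$ are compact in the $\tau$-topology by the usual lower-semicontinuity and compactness properties of relative entropy on a measure space with finite reference measure), a full LDP with good rate function on a space where the rate function has compact sublevel sets automatically yields exponential tightness: take $K_L$ to be the $L$-sublevel set of $I_1$, so that $\limsup_\lambda \tfrac1\lambda\log P^{(\lambda)}(K_L^c)\le -\inf_{K_L^c}I_1 \le -L$.

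For (b), the key input is Lemma~\ref{SINR.equ2}, which gives $|I|\le 2\lambda$ with probability tending to $1$, in fact with the Bennett-type bound $\P\{|I|>2\lambda\}\le e^{-c\lambda^2}$ for a constant $c=c(D)>0$. On the event $\{|I|\le 2\lambda\}$ the number of edges is at most $\binom{|I|}{2}\le 2\lambda^2$, hence the total mass $\|L_2^\lambda\| = 2|E|/\lambda^2 \le 4$ almost surely up to a super-exponentially negligible set. Combined with the fact that $L_2^\lambda$ is supported on pairs of marked points lying in $\skrix\times\skrix$ (a locally finite configuration space) and that $L_1^\lambda$ itself is exponentially tight from (a) — so with probability $\ge 1-e^{-\lambda L}$ the marked points all lie in a compact $K_L\subset\skrix$ — the measure $L_2^\lambda$ is, off an exponentially negligible event, a measure of total mass at most $4$ supported in the compact set $K_L\times K_L$. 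The set $\{\eta\in\skrim(\skrix\times\skrix): \|\eta\|\le 4,\ \supp\eta\subset K_L\times K_L\}$ is compact in the $\tau$-topology (bounded mass, support in a fixed compact), which furnishes the required compact sets and establishes (b).

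Assembling (a) and (b): given $L>0$ choose compacts $\mathcal K_L^{(1)}\subset\Theta$ and $\mathcal K_L^{(2)}\subset\skrim(\skrix\times\skrix)$ with $\limsup_\lambda\tfrac1\lambda\log\P\{L_1^\lambda\notin\mathcal K_L^{(1)}\}\le -L$ and likewise for $L_2^\lambda$; then $\mathcal K_L:=\mathcal K_L^{(1)}\times\mathcal K_L^{(2)}$ is compact in $\Theta\times\skrim(\skrix\times\skrix)$ and, by a union bound,
$$
\limsup_{\lambda\to\infty}\frac1\lambda\log\tilde P^\lambda\big((\mathcal K_L)^c\big)\le -L,
$$
which is exactly exponential tightness of $(\tilde P^\lambda)$. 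I expect the main obstacle to be the careful verification in step (b) that the relevant sets of measures are genuinely $\tau$-compact: one must check that confining both the total mass and the topological support to fixed compacts is enough for sequential compactness in the (rather strong) $\tau$-topology on $\skrix\times\skrix$, handling the fact that $\skrix$ is a configuration space rather than a Euclidean domain; this is where I would spend the bulk of the rigor, likely by reducing to the coarsenings $(A_1,\dots,A_n)$ used elsewhere in the paper and passing to the limit $n\to\infty$ as in the proof of Theorem~\ref{main1a}.
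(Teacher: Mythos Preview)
Your route to controlling $L_2^\lambda$ differs from the paper's. The paper argues via an exponential Chebyshev bound on the edge count directly: it estimates $\E[e^{|E|}]$ by conditioning on the Poisson number of points $|I|$ and summing a binomial--Poisson mixture, obtaining $\P\{|E|\ge \lambda^2 l\}\le e^{-\lambda^2 l}\,e^{(t-1)\lambda}$ for an explicit constant $t$, and hence $\P\{\|L_2^\lambda\|\ge N\}$ decays like $e^{-\lambda^2 q}$. You instead invoke Lemma~\ref{SINR.equ2} (the Bennett bound on $|I|$) together with the deterministic bound $|E|\le\binom{|I|}{2}$; this is slightly more elementary and delivers the same mass control. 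Your additional step of confining the support of $L_2^\lambda$ via the tightness of $L_1^\lambda$ is something the paper's proof omits entirely --- it stops at the total-mass bound.

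There is, however, a genuine gap in your step~(a). The implication ``full LDP with good rate function $\Rightarrow$ exponential tightness'' is \emph{not} automatic outside Polish spaces, and the $\tau$-topology on $\skrim(\skrix)$ is not metrizable. Concretely, your displayed inequality $\limsup_\lambda\tfrac1\lambda\log P^{(\lambda)}(K_L^c)\le -\inf_{K_L^c}I_1$ invokes the LDP upper bound on the set $K_L^c$; but $K_L=\{I_1\le L\}$ is closed (by lower semicontinuity of $I_1$), so $K_L^c$ is \emph{open}, and the LDP upper bound applies only to closed sets. The paper does not actually prove exponential tightness of $(P^{(\lambda)})$ either --- it simply asserts it in the sentence preceding the lemma --- so you are not in worse shape than the paper on this point, but the specific argument you have written for~(a) is invalid as stated. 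A direct route (e.g., bounding exponential moments of $|I|$ in the same spirit as the paper's treatment of $|E|$) would be needed to repair it.
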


\begin{proof} 
	
	Let  $\displaystyle\eta>\min_{a,b}R^{D}(a,b)>0$  and  $t=1-(1-e^{-1})e^{-\eta}.$ Then,  we  use  Chebysheff's  inequality  and  Lemma~\ref{main1c}, to obtain  (for  sufficiently  large  $\lambda$),
	
	$$\P\Big\{  |E|\ge \lambda^2 l\Big\}\le e^{-\lambda^2 l}\me\{e^{|E|}\}\le e^{-\lambda^2 l }\sum_{i=0}^{\infty}\sum_{k=0}^{i} e^{k}\Big(\heap{i}{k}\Big)\Big(e^{-\eta}\Big)^{k}\Big(1-e^{-\eta}\Big)^{i-k}\frac{e^{-\lambda}\lambda^{i}}{i!}\le e^{-\lambda^2 l}e^{-\lambda}e^{t\lambda}.$$
	Given  $N\in\N$  we  choose $N>q$  and  observe  that  for  sufficiently  large  $\lambda$   we  have
	
	$$\P\Big\{ |E|\ge \lambda^2 N\Big\} \le  
	e^{-\lambda^2 q} .$$
	Therefore,  we  have  
	$$\P\Big\{ \|L_2^{\lambda}\|\ge \lambda^2 N/2\Big\} \le  
	e^{-\lambda^2 q/2} ,$$
	
	which  establishes   Lemma~\ref{Tigtness}.

\end{proof}

Define the function
$I\colon{\Theta}\times\skrim(\skrix\times\skrix)\rightarrow[0,\infty],$  by

\begin{equation}\label{main1b.equ}
\begin{aligned}
I(\omega,\pi)= \left\{\begin{array}{ll} H\Big(\omega\,\big|m\otimes Q\Big),  & \mbox{if  $\pi=e^{-R^{D}}\omega\otimes\omega$  }\\
\infty & \mbox{otherwise.}		
\end{array}\right.
\end{aligned}
\end{equation}

\begin{lemma}[]\label{SINR.convexgoodrate}
	$I$ is lower semi-continuous.
\end{lemma}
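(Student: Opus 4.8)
The plan is to show that $I$ is lower semi-continuous by analysing its two pieces separately: the relative entropy $H(\omega\,|\,m\otimes Q)$ and the constraint $\pi=e^{-R^{D}}\omega\otimes\omega$. The natural statement to prove is that the sublevel set $\{(\omega,\pi):I(\omega,\pi)\le c\}$ is closed in $\Theta\times\skrim(\skrix\times\skrix)$ for every $c\ge 0$; equivalently, if $(\omega_k,\pi_k)\to(\omega,\pi)$ with $I(\omega_k,\pi_k)\le c$, then $I(\omega,\pi)\le c$. Since $I(\omega_k,\pi_k)<\infty$ forces $\pi_k=e^{-R^{D}}\omega_k\otimes\omega_k$ and $\|\omega_k\|=1$, each term in the sequence lies on the graph of the map $\omega\mapsto e^{-R^{D}}\omega\otimes\omega$ restricted to probability measures, and there $I(\omega_k,\pi_k)=H(\omega_k\,|\,m\otimes Q)\le c$.

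First I would record that $\omega\mapsto H(\omega\,|\,m\otimes Q)$ is itself lower semi-continuous on $\skrim(\skrix)$ and has compact sublevel sets — this is the standard fact that relative entropy is a good rate function (it is precisely $I_1$ from Theorem~\ref{main1a}), so it may be quoted. Hence $\{\omega:\|\omega\|=1,\ H(\omega\,|\,m\otimes Q)\le c\}$ is compact, and in particular the sequence $(\omega_k)$ has the property that any subnet converges (along $\Theta$) to a limit with entropy $\le c$; since $\omega_k\to\omega$ already, we get $\|\omega\|=1$ and $H(\omega\,|\,m\otimes Q)\le c$. It then remains to check that the limiting pair still satisfies the constraint, i.e. $\pi=e^{-R^{D}}\omega\otimes\omega$. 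This is where the continuity of the product map enters: if $\omega_k\to\omega$ then $\omega_k\otimes\omega_k\to\omega\otimes\omega$, and multiplying by the fixed measurable weight $e^{-R^{D}}$ — which is bounded by $1$ since $R^{D}\ge 0$ — preserves the limit, so $\pi_k=e^{-R^{D}}\omega_k\otimes\omega_k\to e^{-R^{D}}\omega\otimes\omega$. Because limits are unique, $\pi=e^{-R^{D}}\omega\otimes\omega$, and therefore $I(\omega,\pi)=H(\omega\,|\,m\otimes Q)\le c$, closing the sublevel set.

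The main obstacle is the continuity of $\omega\mapsto e^{-R^{D}}\omega\otimes\omega$ in the relevant topology: in the $\tau$-topology (testing against bounded measurable functions) the product map $\omega\mapsto\omega\otimes\omega$ is \emph{not} jointly continuous in general, so one cannot simply invoke bilinearity. The clean way around this is to use the compactness of the entropy sublevel sets: restricted to $\{H(\,\cdot\,|\,m\otimes Q)\le c\}$ the topology is well behaved, and on a compact set weak-type convergence of $\omega_k$ does upgrade to convergence of $\omega_k\otimes\omega_k$ tested against products $f\otimes g$ of bounded measurable functions, which determine the measure. I would carry this out by fixing bounded measurable $f,g:\skrix\to\R$, writing $\langle f\otimes g,\ e^{-R^{D}}\omega_k\otimes\omega_k\rangle$, and splitting the difference from the limit into two terms, each controlled by $\langle h,\omega_k\rangle\to\langle h,\omega\rangle$ for suitable bounded measurable $h$ (here $h$ absorbs the bounded kernel $e^{-R^{D}}$ after integrating out one variable). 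Once that is in place, lower semi-continuity of $I$ follows immediately, and in fact the argument simultaneously shows $I$ has compact sublevel sets, i.e. $I$ is a good rate function.
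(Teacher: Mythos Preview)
Your approach is essentially the same as the paper's: decompose $I$ into the relative-entropy part (lower semi-continuous by standard results) and the constraint $\pi=e^{-R^{D}}\omega\otimes\omega$ (a closed condition). The paper's proof is a two-line sketch that simply \emph{asserts} the constraint is closed and that $H(\omega\,|\,m\otimes Q)$ is lower semi-continuous, then concludes; it gives no further justification.

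You go considerably further, and in doing so you put your finger on a genuine technical point the paper glosses over: in the $\tau$-topology the product map $\omega\mapsto\omega\otimes\omega$ is not continuous in general, so ``closedness of the constraint'' is not automatic. Your idea of exploiting compactness of the entropy sublevel sets to restrict attention to a nicer class of $\omega$'s is the right instinct. The splitting argument you sketch, however, is not quite complete as written: after writing
\[
\int\!\!\int f(x)g(y)e^{-R^{D}(x,y)}[\omega_k-\omega](dx)\,\omega_k(dy),
\]
you have pointwise convergence of the inner integral to zero, uniformly bounded, but $\omega_k$ is still varying, so a bounded-convergence argument does not apply directly. To close this you would need either a uniform-integrability consequence of the entropy bound (so that $\tau$-convergence on $\{H\le c\}$ upgrades appropriately) or to test only against enough functions to identify the limit measure and then invoke uniqueness. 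This is more than the paper itself provides, so your write-up is already an improvement; just be aware that the last step needs one more idea to be airtight.
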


\begin{proof}
Let $(\omega,\pi)\in{\Theta}\times\skrim(\skrix\times\skrix)$   and  observe  that   $\pi=e^{-R^{D}}\omega\otimes\omega$ is  closed  condition.  Further, we  note  that  the relative  entropy,   $ H\Big(\omega\,\big|m\otimes Q\Big),$ is  a  lower  semi-continuous  function  on  the  space $ {\Theta}\times\skrim(\skrix\times\skrix) $.   As  $I$  is  a  function  of  a  relative entropy, we   conclude  that   	$I$ is lower semi-continuous.

\end{proof}

Using \cite[Theorem~5(b)]{Bi2004}   together with the two previous lemmas and the
large deviation principles we have established
Theorem~\ref{main1a}  and  Theorem~\ref{main1b}(i) ensure
that under $(\tilde{P}^\lambda)$ the random variables $(L_1^{\lambda}, L_2^{\lambda})$ satisfy a large deviation principle on
$\skrim(\skrix) \times \skrim(\skrix\times\skrix)$ with good rate function  $I$  which  ends  the  proof of  Theorem~\ref{main1b}(ii).

\section{Proof of   Theorem~\ref{main2} by Large  deviation  Technique }

\subsection{Proof  of  Theorem~\ref{main2}}
We  begin the  proof  of  the  asymptotic equipartition property,  by  first  establishing  a  weak  law  of large  numbers  for   the empirical  mark  measure  and  the  empirical  pair  measure  og  the  SINR  graph.
\begin{lemma}\label{WLLN}
	Suppose   $X^{\lambda}$  is  an  SINR  graph  with  intensity  measure  
	$\lambda m:D \to [0,1]$ and   a   marked  probability  kernel  $Q$  from  $D$   to  $\R_+$  and  path  loss  function   $\ell(r)=r^{-\alpha}, $  for  $\alpha>0 .$ Assume $\lambda\big[\tau_{\lambda}(a)\gamma_{\lambda}(a)+\tau_{\lambda}(b)\gamma_{\lambda}(b)\big]\to \beta(a,b)\in(0,\infty)$,\, for  all  $a,b\in \R_+.$
	
	 Let  $Q$  be  the  exponential  distribution  with  parameter $c.$  Then, for  $\eps>0$   we  have  
	
	$$\lim_{\lambda\to\infty}\P\Big\{\sup_{(x,\sigma_x)\in\skrix}\Big|L_1^{\lambda}(x,\sigma_x)-m\otimes Q(x,\sigma_x) \Big|>\eps\Big\}=0$$ and  
		$$\lim_{\lambda\to\infty}\P\Big\{\sup_{([x,\sigma_x],[y,\sigma_y])\in\skrix\times\skrix}\Big|L_2^{\lambda}([x,\sigma_x],[y,\sigma_y])-e^{-R^{D}}m\otimes Q\times m\otimes Q.([x,\sigma_x],[y,\sigma_y]) \Big|>\eps\Big\}=0$$
\end{lemma}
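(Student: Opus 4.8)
The plan is to read off both assertions as weak laws of large numbers, which can be done in two ways: either directly from the large deviation principles already established, or, more elementarily, from cell-wise exponential (for $L_1^{\lambda}$) and Chebyshev-type (for $L_2^{\lambda}$) concentration estimates on a fixed finite decomposition $A_1,\dots,A_n$ of $D\times\R_+$, followed by a union bound over the cells and a refinement $n\to\infty$. I present the self-contained second route and indicate the shortcut at the end.

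For the first assertion, fix the decomposition $A_1,\dots,A_n$. As in the method-of-types argument, $\lambda L_1^{\lambda}(A_i)$ is Poisson with mean $\lambda\,m\otimes Q(A_i)$. The Chernoff bound for Poisson variables (equivalently Bennett's inequality, exactly as in the proof of Lemma~\ref{SINR.equ2}) gives, for each $i$ and each $\eps>0$, a constant $c_i(\eps)>0$ with $\P\{\,|L_1^{\lambda}(A_i)-m\otimes Q(A_i)|>\eps\,\}\le 2e^{-\lambda c_i(\eps)}$. A union bound over $i=1,\dots,n$ and $\lambda\to\infty$ then yields $\P\{\max_{1\le i\le n}|L_1^{\lambda}(A_i)-m\otimes Q(A_i)|>\eps\}\to 0$. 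Since $m\otimes Q$ is a finite measure and $m$ is absolutely continuous, refining the partition so that the largest cell mass tends to $0$ upgrades this cell-wise bound to the supremum over $(x,\sigma_x)\in\skrix$; this is exactly the step where one must check that the $c_i(\eps)$ stay bounded away from $0$ along the refinement.

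For the second assertion, condition on the event $L_1^{\lambda}=\omega$. By Lemma~\ref{main1c1} the connection probability of two marked points equals $p_{\lambda}=e^{-\lambda R_{\lambda}^{D}}$ with $\lambda R_{\lambda}^{D}\to R^{D}$, so $p_{\lambda}\to e^{-R^{D}}$; and, as recalled at the start of the proof of Theorem~\ref{main1b}, conditionally on $L_1^{\lambda}=\omega$ the contribution of $L_2^{\lambda}$ in a product cell $A_i\times A_j$ is a normalised sum of independent Bernoulli$(p_{\lambda})$ variables over the order $\lambda^2\omega(A_i)\omega(A_j)$ candidate pairs. Hence its conditional mean converges to $e^{-R^{D}}\omega(A_i)\omega(A_j)$ (up to the symmetrisation factor in the definition of $L_2^{\lambda}$) and its conditional variance is of order $\lambda^{-2}$, so Chebyshev's inequality gives conditional concentration of $L_2^{\lambda}(A_i\times A_j)$ around $e^{-R^{D}}\omega(A_i)\omega(A_j)$. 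Integrating over the distribution of $L_1^{\lambda}$ and invoking the first assertion to replace $\omega$ by $m\otimes Q$, with an error that vanishes by continuity of $\omega\mapsto e^{-R^{D}}\omega\otimes\omega$, and then applying a union bound over $(i,j)$ and refining the partition, gives the supremum form of the second assertion.

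The main difficulty lies in two places: (i) promoting the finite cell-wise estimates to a uniform supremum over the continuum of points $([x,\sigma_x],[y,\sigma_y])$, which requires uniform control of the concentration constants as the partition is refined, where absolute continuity of $m$ enters; and (ii) in the second assertion, combining the \emph{conditional} concentration of $L_2^{\lambda}$ with the \emph{unconditional} concentration of $L_1^{\lambda}$, that is, bounding uniformly the discrepancy between $e^{-R^{D}}\omega\otimes\omega$ and $e^{-R^{D}}m\otimes Q\otimes m\otimes Q$ on the good event $\{L_1^{\lambda}\approx m\otimes Q\}$. Alternatively, both assertions follow at once from Theorem~\ref{main1a} and Theorem~\ref{main1b}(ii): the good rate functions $I_1$ and $I$ vanish only at $m\otimes Q$ and at $(m\otimes Q,\,e^{-R^{D}}m\otimes Q\otimes m\otimes Q)$ respectively, since $H(\omega\,|\,m\otimes Q)=0$ forces $\omega=m\otimes Q$, which in turn forces $\pi$; hence every closed set avoiding these points has strictly positive infimal rate and asymptotically negligible probability, and one only has to note that the complement of an $\eps$-neighbourhood in the relevant topology is such a closed set.
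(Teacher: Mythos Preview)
Your proposal is correct, and the alternative you sketch in your final paragraph is exactly the route the paper takes. The paper defines the bad sets $F_1$, $F_2$, and $F_3=F_1\cup F_2$, applies the upper bound of the joint LDP (Theorem~\ref{main1b}(ii)) to obtain
\[
\limsup_{\lambda\to\infty}\frac{1}{\lambda}\log\P\big\{(L_1^{\lambda},L_2^{\lambda})\in F_3\big\}\le -\inf_{(\omega,\varpi)\in F_3}I(\omega,\varpi),
\]
and then argues by contradiction (via lower semicontinuity of $I$) that this infimum is strictly positive, since the unique zero $(m\otimes Q,\,e^{-R^{D}}m\otimes Q\otimes m\otimes Q)$ of $I$ lies outside the closed set~$F_3$.

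Your primary route --- cell-wise Poisson concentration for $L_1^{\lambda}$, conditional Chebyshev for $L_2^{\lambda}$, union bound over a finite partition, then refinement --- is a legitimate and more elementary alternative that does not presuppose the full LDP; this would be the natural argument if one wanted the weak law independently of Theorems~\ref{main1a} and~\ref{main1b}. The price, as you correctly identify in your points (i) and (ii), is that you must control the concentration constants uniformly under refinement and handle the mixing of conditional and unconditional estimates by hand. The paper's LDP route sidesteps both issues at once, since the required uniformity is already encoded in the goodness of the rate function and the choice of the $\tau$-topology.
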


\begin{proof}
	Let
	$$ F_1=\Big\{\omega:\sup_{(x,\sigma_x)\in\skrix}|\omega(x,\sigma_x)-m\otimes Q(x,\sigma_x)|>\eps\Big\},$$ $$F_2=\Big\{\varpi:\sup_{([x,\sigma_x],[y,\sigma_y])\in\skrix\times\skrix}|\varpi([x,\sigma_x],[y,\sigma_y])-e^{-R^{d,T}}m\otimes Q\times m\otimes Q([x,\sigma_x],[y,\sigma_y])|>\eps\Big \}$$
	
	and  $F_3=F_1\cup F_2.$     Now, observe  from  Theorem~\ref{main1a}  that
	
	$$\lim_{\lambda\to\infty}\frac{1}{\lambda}\log \P\Big\{(L_1^{\lambda},L_2^{\lambda})\in F_3^{c}\Big \}\le -\inf_{(\omega,\varpi) \in  F_3^{c}}I(\omega,\varpi).$$
	
	It  suffices  for  us  to  show  that   $I$  is strictly  positive.  Suppose  there  is  a  sequence  $(\omega_n,\varpi_n)\to(\omega,\varpi)$   such  that  $I(\omega_{\lambda},\varpi_{\lambda})\downarrow I(\omega,\varpi)=0.$  This  implies  $\omega=m\otimes Q$  and  $\varpi=e^{-R^{D}}m\otimes Q\times m\otimes Q$  which  contradicts  $(\omega,\varpi)\in F_3^{c}.$  This  ends  the  proof of  the  Lemma.
\end{proof} 

Now,  the  distribution  of  the  marked  PPP  $P(x)=\P\Big\{X^{\lambda}=x\Big\}$  is  given  by  $$P_{\lambda}(x)=\prod_{i=1}^{I}|\mu\otimes Q(x_i,\sigma_i)\prod_{(i,j)\in E}\frac{e^{-\lambda R_{\lambda}^{D}([x_i,\sigma_i],[y_j,\sigma_j])}}{1-e^{-\lambda R_{\lambda}^{D}([x_i,\sigma_i],[y_j,\sigma_j])}}\prod_{(i,j)\in \skrie} (1-e^{-\lambda R_{\lambda}^{D}([x_i,\sigma_i],[y_j,\sigma_j])})\prod_{i=1}^{I}(1-e^{-\lambda R_{\lambda}^{D}([x_i,\sigma_i],[y_i,\sigma_i])})$$

$$\begin{aligned}-\frac{1}{\lambda^2}\log P_{\lambda}(x)=\frac{1}{\lambda}\Big\langle -\log \mu\otimes Q\,,L_1^{\lambda}\Big\rangle +\Big \langle -\log \Big(\sfrac{e^{-\lambda R_{\lambda}^{D}}}{1-e^{-\lambda R_{\lambda}^{D}}}\Big ) \,,L_{2}^{\lambda}\Big \rangle & +\Big\langle-\log (1-e^{-\lambda R_{\lambda}^{D}})\,,L_1^{\lambda}\otimes L_{1}^{\lambda}\Big\rangle\\
& +\Big\langle-\log (1-e^{-\lambda R_{\lambda}^{D}})\,,L_{\Delta}^{\lambda}\Big\rangle
\end{aligned}  $$

Notice,  $\lim_{\lambda\to\infty}\lambda R_{\lambda}^{D}\to  R^{D},$       $\displaystyle \lim_{\lambda\to\infty}\frac{1}{\lambda}\Big\langle -\log \mu\otimes Q\,,L_1^{\lambda}\Big\rangle=\lim_{\lambda\to \infty}\frac{1}{\lambda}\Big\langle-\log (1-e^{-\lambda R_{\lambda}^{D}})\,,L_{\Delta}^{\lambda}\Big\rangle=0.$ 

 Using,  Lemma~\ref{WLLN}  we  have   $$\lim_{\lambda\to\infty}\Big \langle -\log \Big(\sfrac{e^{-\lambda R_{\lambda}^{D}}}{1-e^{-\lambda R_{\lambda}^{D}}}\Big ) \,,L_{2}^{\lambda}\Big \rangle=\Big \langle -\log \Big(\sfrac{e^{-R^{D}}}{1-e^{- R^{D}}}\Big ) \,, e^{-R^{D}}m\otimes Q\times m\otimes Q\Big \rangle$$

$$\lim_{\lambda\to\infty}\Big\langle-\log (1-e^{-\lambda R_{\lambda}^{D}})\,,L_1^{\lambda}\otimes L_{1}^{\lambda}\Big\rangle=\Big\langle-\log (1-e^{-R^{D}})\,,m\otimes Q\otimes \times m\otimes Q\Big\rangle,$$
which  concludes  the  proof  of  Theorem~\ref{main2}.

\section{Proof  of  Theorem~\ref{main2a} and  Corollary~\ref{main2b}}
For  $\omega\in\skrip(\skrix)$  we  define  the  spectral  potential  of  the  marked  SINR graph $(X^{\lambda})$  conditional  on  the  event  $\big\{L_1^{\lambda}=\omega \big\},$   $U_Q(g,\omega) $  as

\begin{equation}\label{LLDP.equ1}
U_{Q}(g,\omega )= \Big \langle g\,,\,e^{-R^{D}}\omega\otimes \omega \Big \rangle.
\end{equation}

The  following  remarkable  properties  holds  for  $U_{Q} $: 
\begin{itemize}
	\item  (i)  It  is  finite  on  $\skric(\omega ):=\Big\{g\in \skrix \to \R\,\Big |e^{U_Q(g,\omega )}< \infty \Big\}$
	\item (ii)  It  is  monotone. 
	\item (iii) it  is additively  homogeneous.  
	\item (iv) it  is  convex  in  $g.$
\end{itemize}

 For $\pi\in\skrim(\skrix\times\skrix)$, we  observe that  $I_{\omega}( \pi)$ is the  Kullback  action  of  the marked  SINR graph $X^{\lambda}$.
 
 \begin{lemma}\label {LLDP.equ2} The  following  hold  for the  Kullback  action or divergence function $I_{\omega}(\pi)$:
 	\begin{itemize}
 		\item $$I_{\omega}(\pi)=\sup_{g\in \skric}\big\{\langle g,\, \pi\rangle-\langle g,\, e^{-R^{D}}\omega\otimes \omega \rangle  \big\}$$ 
 		\item  The  function  $I_{\omega}(\pi) $  is  convex  and  lower semi-continuous   on  the  space  $\skrim(\skrix\times\skrix).$
 		\item For  any real  $\alpha$,  the  set  $\Big\{ \pi\in \skrim(\skrix\times\skrix): \, I_{\omega}(\pi)\le \alpha \Big\}$  is  weakly  compact.
 	\end{itemize}
 \end{lemma}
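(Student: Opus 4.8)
The plan is to deduce all three assertions from the explicit identification of the rate function obtained in Theorem~\ref{main1b}(i) --- namely $I_{\omega}(\pi)=0$ if $\pi=e^{-R^{D}}\omega\otimes\omega$ and $I_{\omega}(\pi)=\infty$ otherwise --- together with the fact that the spectral potential $U_{Q}(g,\omega)=\langle g,\,e^{-R^{D}}\omega\otimes\omega\rangle$ is \emph{linear} in $g$, so that its Legendre--Fenchel dual is an indicator-type functional concentrated on a single point. First I would record the elementary observation that $e^{-R^{D}}\omega\otimes\omega$ is a finite measure on $\skrix\times\skrix$ (since $R^{D}\ge 0$, its total mass is at most $\|\omega\|^{2}\le 1$); consequently $U_{Q}(g,\omega)<\infty$ for every bounded $g$, so $\skric(\omega)$ contains all bounded continuous functions, and the supremum in the first assertion may equivalently be restricted to that class.

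For the variational formula I would argue by cases. If $\pi=e^{-R^{D}}\omega\otimes\omega$, then $\langle g,\pi\rangle-\langle g,\,e^{-R^{D}}\omega\otimes\omega\rangle=0$ for every $g$, so the supremum equals $0=I_{\omega}(\pi)$. If $\pi\neq e^{-R^{D}}\omega\otimes\omega$, then since bounded continuous functions separate finite Borel measures on $\skrix\times\skrix$ (a constant test function already separates measures of different total mass), there is a bounded continuous $g_{0}$ with $c:=\langle g_{0},\pi\rangle-\langle g_{0},\,e^{-R^{D}}\omega\otimes\omega\rangle\neq 0$; replacing $g_{0}$ by $t\,(\operatorname{sgn}(c))\,g_{0}\in\skric(\omega)$ and letting $t\to\infty$ sends the bracket to $+\infty$, so the supremum is $+\infty=I_{\omega}(\pi)$. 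This separation-and-rescaling step is the only place where (a mild) topological regularity of $\skrix$ is invoked, and it is really the sole content of the lemma; I expect it to be the main obstacle, though a modest one.

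The remaining two properties are then formal. For convexity and lower semicontinuity I would invoke the variational formula just established: for each fixed bounded continuous $g$ the map $\pi\mapsto\langle g,\pi\rangle-\langle g,\,e^{-R^{D}}\omega\otimes\omega\rangle$ is affine and continuous for the weak topology on $\skrim(\skrix\times\skrix)$, and a pointwise supremum of affine, lower semicontinuous functions is convex and lower semicontinuous; alternatively one notes directly that the effective domain of $I_{\omega}$ is the single point $e^{-R^{D}}\omega\otimes\omega$. For weak compactness of the sublevel sets, I would observe that for $\alpha<0$ the set $\{\pi:\,I_{\omega}(\pi)\le\alpha\}$ is empty, while for $\alpha\ge 0$ it equals the singleton $\{e^{-R^{D}}\omega\otimes\omega\}$, which is trivially weakly compact. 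This would complete the proof of Lemma~\ref{LLDP.equ2}.
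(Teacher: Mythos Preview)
Your argument is correct. The paper itself does not prove this lemma: it explicitly omits the proof and refers the reader to \cite{DA2017a} for an analogous argument in the setting of typed random graph processes. What you supply is therefore not a variant of the paper's proof but a self-contained replacement for it. Your route is also more direct than what a general reference would provide: rather than appealing to abstract properties of Legendre--Fenchel duals of spectral potentials, you exploit the specific degenerate form of $I_{\omega}$ here (an indicator of a single point) so that the variational formula reduces to a separation-and-rescaling argument and the level-set compactness becomes the compactness of a singleton. One small remark: since the paper works in the $\tau$-topology on $\skrim(\skrix\times\skrix)$, you may take the separating test functions to be bounded \emph{measurable} rather than bounded continuous, which removes even the mild topological regularity you flag as the only potential obstacle; with that adjustment your proof goes through without any caveat.
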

The  proof  of  Lemma~\ref{LLDP.equ2} is  omitted from  the  article. Interested  readers  may  refer  to  \cite{DA2017a}  for  similar proof for empirical  measures of ` the  Typed  Random  Graph  Processes,  and/or the  references therein  for proof of  the  lemma for  empirical measures  on  measurable  spaces.\\

Now   note  from  Lemma~\ref{LLDP.equ2},  for  any  $\eps>0$, there  exists a  function  $g\in\skrim(\skrix\times\skrix)$   such  that  
$$I_{\omega}(\pi)-\sfrac{\eps}{2}< \langle g\,,\, \pi \rangle -U_{Q}(g,\omega).$$

Define  the  probability  distribution   $P_{\omega} $  by  

$$P_{\omega}(x)=\prod_{(i,j)\in E}e^{g(x_i,x_j)}\prod_{(i,j)\in \skrie}e^{h_\lambda(x_i,x_j)}	,$$

where  $$h_{\lambda}(x,y)=\lambda \log\Big[(1-e^{-\lambda R_{\lambda}^{D}(x,y)}+e^{-\lambda R_{\lambda}^{D}(x,y)+g(x,y)/\lambda})\Big]$$
Then,  observe  that  
                    
  $$\begin{aligned}
\frac{dP_{\omega}}{d\tilde{P}_{\omega}}(x)&=\prod_{(i,j)\in E}e^{-g(x_i,x_j)/\lambda}\prod_{(i,j)\in \skrie} e^{-h_{\lambda}(x_i,x_j)/\lambda}\\
&= e^{-\lambda (\langle\sfrac{1}{2} g,L_2^{\lambda}\rangle-\lambda \langle \sfrac{1}{2}h_{\lambda},L_1^{\lambda}\otimes L_1^{\lambda}\rangle )+\langle \sfrac{1}{2} h_{\lambda}, L_{\Delta}^{\lambda} \rangle }
 \end{aligned}$$

 Now,  we  define  the  neighbourhood  of  $\nu,$    $B_{\nu}$ by
 $$B_{\nu}:=\Big\{\pi\in \skrim(\skrix\times\skrix): \, \langle g, \pi \rangle >\langle g, \nu\rangle -\eps/2 \Big \}$$

 Observe, under the  condition  $L_{2}^{\lambda}\in B_{\nu}$   we  have

   $$\begin{aligned}
 \frac{dP_{\omega}}{d\tilde{P}_{\omega}}< e^{-\langle\sfrac{1}{2} g,\nu\rangle+U_{Q}(g,\,\omega)+\lambda \sfrac{\eps}{2}}<e^{-\lambda  I_{\omega}(\nu)+\lambda \eps} \end{aligned}$$
 
 Hence,  we  have  
 
$$P_{\omega }\Big\{ x^{\lambda }\in\skrig_P\Big | L_2^{\lambda}\in B_{\nu}\Big\}\le \int \1_{\{L_2^{\lambda}\in B_{\nu}\}} d\tilde{P}_{\omega}(x^{\lambda})\le \int e^{-\lambda I_{\omega(\nu)}-\lambda\eps}d\tilde{P}_{\omega}(x^{\lambda}) \le e^{-\lambda I_{\omega}(\nu)-\lambda\eps}.$$.

Observe  that   $I_{\omega} (\nu)=\infty$ implies  Theorem ~\ref{main2a} (ii),  hence  it  sufficient for  us  to  establish  it  for  a probability  measure  of  the  form $\nu=ge^{-R^{D}}\omega\otimes \omega,$ where  $g=1$  and  for  $I_{\omega}(\nu)=0.$  Fix  any  number  $\eps>0$  and  any  neigbourhood $B_{\nu}\subset \skrim(\skrix\times\skrix)$.  Now  define  the  sequence  of  sets  $$\skrig_{P}^{\lambda}=\Big\{ y\in \skrig_{P}: L_{2}^{\lambda}(y)\in B_{\nu}, \Big |\langle g, L_{2}^{\lambda}\rangle-\langle g, \nu\rangle\Big |\le \sfrac{\eps}{2}\Big\} .$$     

 Note that  for  all  $y\in \skrig_P^{\lambda}$ we  have    
 
  $$\begin{aligned}
\frac{dP_{\omega}}{d\tilde{P}_{\omega}}> e^{-\langle\sfrac{1}{2} g,\nu\rangle+U_{Q}(g,\,\omega)+\lambda \sfrac{\eps}{2}}>e^{\lambda \eps} \end{aligned}.$$

This  yields  
$$P_{\omega}(\skrig_P^{\lambda}) =\int_{\skrig_P^{\lambda}}dP_{\omega}(y)\ge\int e^{-\langle\sfrac{1}{2} g,\nu\rangle+U_{Q}(g,\,\omega)+\lambda \sfrac{\eps}{2}}d\tilde{P}_{\omega}(y)\ge e^{\lambda \eps}\tilde{P}_{\omega}(\skrig_P^{\lambda}).$$
Using   the law  of  large  numbers, we  have  that  $\lim_{\lambda\to\infty}\tilde{P}_{\omega}(\skrig_P^{\lambda})=1.$  This  completes  of  the  Theorem.\\

{\bf  Proof of  Corollary~\ref{main2b}}

We observe that, by  Lemma~\ref{Tigtness} the  law  of   empirical connectivity measure is exponentially  tight. Henceforth, without loss of generality we can assume that the
set $F$  in Theorem 2.2(ii) above is relatively compact. If we choose any $\eps> 0$; then for each functional  $\nu\in F$ we can find a weak neigbourhood such that the estimate of Theorem 2.1(i) above holds. From
all these neigbhoourhood, we choose a finite cover of $\skrig_P$ and sum up over the estimate in
Theorem 2.1(i) above to obtain   	$$\limsup_{\lambda\to\infty}\frac{1}{\lambda}\log \P_{\omega}\Big\{X^{\lambda}\in \skrig_P\,\Big|\, L_2 ^{\lambda}\in F\Big\}\le -\inf_{\pi\in F}I_{\omega}(\pi)+\eps.$$

Since $\eps$  was arbitrarily chosen and the lower bound in Theorem 2.1(ii) is implies the lower bound in
Theorem 2.2(i) we have the required results which completes the proof.


\end{document}